\titlespacing{\subsection}{0pt}{1ex}{0ex}
\DeclareMathAlphabet\mathbfcal{OMS}{cmsy}{b}{n}
\newtheorem{definition}{Definition}
\newtheorem{assumption}{Assumption}
\newtheorem{lemma}{Lemma}
\newtheorem{theorem}{Theorem}
\newtheorem{proposition}{Proposition}
\newtheorem{remark}{Remark}
\newtheorem{preremark3}{Theorem}[section]
\newmdtheoremenv{theo}{Theorem}
\providecommand{\R}{\ensuremath \mathbb{R}}
\providecommand{\N}{\ensuremath \mathbb{N}}
\newcommand{\ie}{\textit{i.e., }}
\newcommand{\eg}{\textit{e.g., }}
\newcommand{\norm}[1]{\left\Vert#1\right\Vert}
\newcommand{\defeq}{\vcentcolon=}
\DeclareMathOperator*{\minimize}{\mathrm{minimize}}
\newcommand{\eye}{{\mathbf{I}}}
\newcommand{\fstate}{\mathbf{x}}
\newcommand{\rstate}{\mathbf{x}_\mathrm{r}}
\newcommand{\nstate}{\mathbf{x}_\mathrm{n}}
\newcommand{\rdotstate}{\dot{\mathbf{x}}_\mathrm{r}}
\newcommand{\ndotstate}{\dot{\mathbf{x}}_\mathrm{n}}
\newcommand{\obs}{\mathbf{y}}
\newcommand{\tangent}{\mathbf{V}_\mathrm{r}}
\newcommand{\orthogonal}{\mathbf{V}_\mathrm{n}}
\newcommand{\xnerror}{\tilde{\fstate}_\mathrm{n}}
\newcommand{\xndoterror}{\dot{\tilde{\fstate}}_\mathrm{n}}
\newcommand{\zstate}{\mathbf{z}_\mathrm{r}}
\newcommand{\zdotstate}{\dot{\mathbf{z}}_\mathrm{r}}
\newcommand{\wmap}{\mathbf{w}} 
\newcommand{\vmap}{\mathbf{v}} 
\newcommand{\rom}{\mathbf{r}}
\newcommand{\An}{\mathbf{A}_\mathrm{n}}
\newcommand{\Ar}{\mathbf{A}_\mathrm{r}}
\newcommand{\B}{\mathbf{B}}
\newcommand{\C}{\mathbf{C}}
\newcommand{\Gj}{\mathbf{G}_\mathrm{j}}
\newcommand{\y}{\mathbf{y}}
\newcommand{\wnl}{\mathbf{w}_{\mathrm{nl}}} 
\newcommand{\rnl}{\mathbf{r}_{\mathrm{nl}}} 
\newcommand{\Br}{\mathbf{B}_\mathrm{r}} 
\newcommand{\Bn}{\mathbf{B}_\mathrm{n}} 
\newcommand{\manError}{\mathbf{e}}
\newcommand{\dd}{\mathbf{d}}
\newcommand{\ctrl}{\mathbf{u}}
\newcommand{\Cntrl}{\mathcal{U}}
\newcommand{\spectralsubspace}{E}
\newcommand{\fnl}{\mathbf{f}_{\mathrm{nl}}}
\newcommand{\nfstate}{{n_{\mathbf{f}}}}
\newcommand{\nrstate}{n}
\newcommand{\nctrl}{m}
\title{\LARGE \bf
Robust Nonlinear Reduced-Order Model Predictive Control
}
\author{John Irvin Alora$^{\star, 1}$, Luis A. Pabon$^{\star, 1}$, Johannes K\"ohler$^2$, Mattia Cenedese$^3$, \\
 Ed Schmerling$^1$, Melanie N. Zeilinger$^2$, George Haller$^3$, Marco Pavone$^1$%
 \thanks{$^\star$The first two authors contributed equally to this work}
\thanks{$^1$Department of Aeronautics and Astronautics, Stanford University (e-mail: \{jjalora, lpabon, schmrlng, pavone\}@stanford.edu).}%
\thanks{$^2$Institute for Dynamic Systems and Control, ETH Zürich, Zürich CH-8092, Switzerland (e-mail: \{jkoehle, mzeilinger\}@ethz.ch.}%
\thanks{$^3$Institute for Mechanical Systems, ETH Z\"{u}rich (e-mail: \{mattiac, georgehaller\}@ethz.ch).}%
\thanks{J.A. is supported by the Secretary of the Air Force STEM Ph.D. Fellowship. This work was supported by the NASA University Leadership Initiative (grant \#80NSSC20M0163) and KACST; this article solely reflects the opinions and conclusions of its authors and not any Air Force, NASA, nor KACST entity.}%
\thanks{J.K. was supported by an ETH Career Seed Award funded through the ETH Zurich Foundation and Swiss National Science Foundation under NCCR Automation (grant agreement 51NF40 180545).}%
}
\begin{document}
\maketitle
\thispagestyle{empty}
\pagestyle{empty}

\begin{abstract}
Real-world systems are often characterized by high-dimensional nonlinear dynamics, making them challenging to control in real time. While reduced-order models (ROMs) are frequently employed in model-based control schemes, dimensionality reduction introduces model uncertainty which can potentially compromise the stability and safety of the original high-dimensional system. In this work, we propose a novel reduced-order model predictive control (ROMPC) scheme to solve constrained optimal control problems for nonlinear, high-dimensional systems.
To address the challenges of using ROMs in predictive control schemes, we derive an error bounding system that dynamically accounts for model reduction error. Using these bounds, we design a robust MPC scheme that ensures robust constraint satisfaction, recursive feasibility, and asymptotic stability. We demonstrate the effectiveness of our proposed method in simulations on a high-dimensional soft robot with nearly 10,000 states. 
\end{abstract}

\section{INTRODUCTION}
High-dimensional dynamical systems, \eg derived from continuum mechanics, arise in various fields of science and engineering, including robotics, aerospace, chemical engineering, and neuroscience. In many of these applications, ensuring the safe operation of these systems is of utmost importance. Unfortunately, the high dimensionality of these models poses significant computational challenges when used in online optimal control schemes that enforce safety constraints, such as model predictive control (MPC) \cite{kouvaritakis2016model}. 
Model reduction is an effective approach to mitigate this computational bottleneck \cite{antoulas2000survey} and enable real-time control. The problem is that model uncertainty stemming from dimensionality reduction can cause the controlled high-dimensional system to violate critical constraints, ultimately compromising stability and safety.
\begin{figure}
\centering\includegraphics[width=0.5\textwidth]{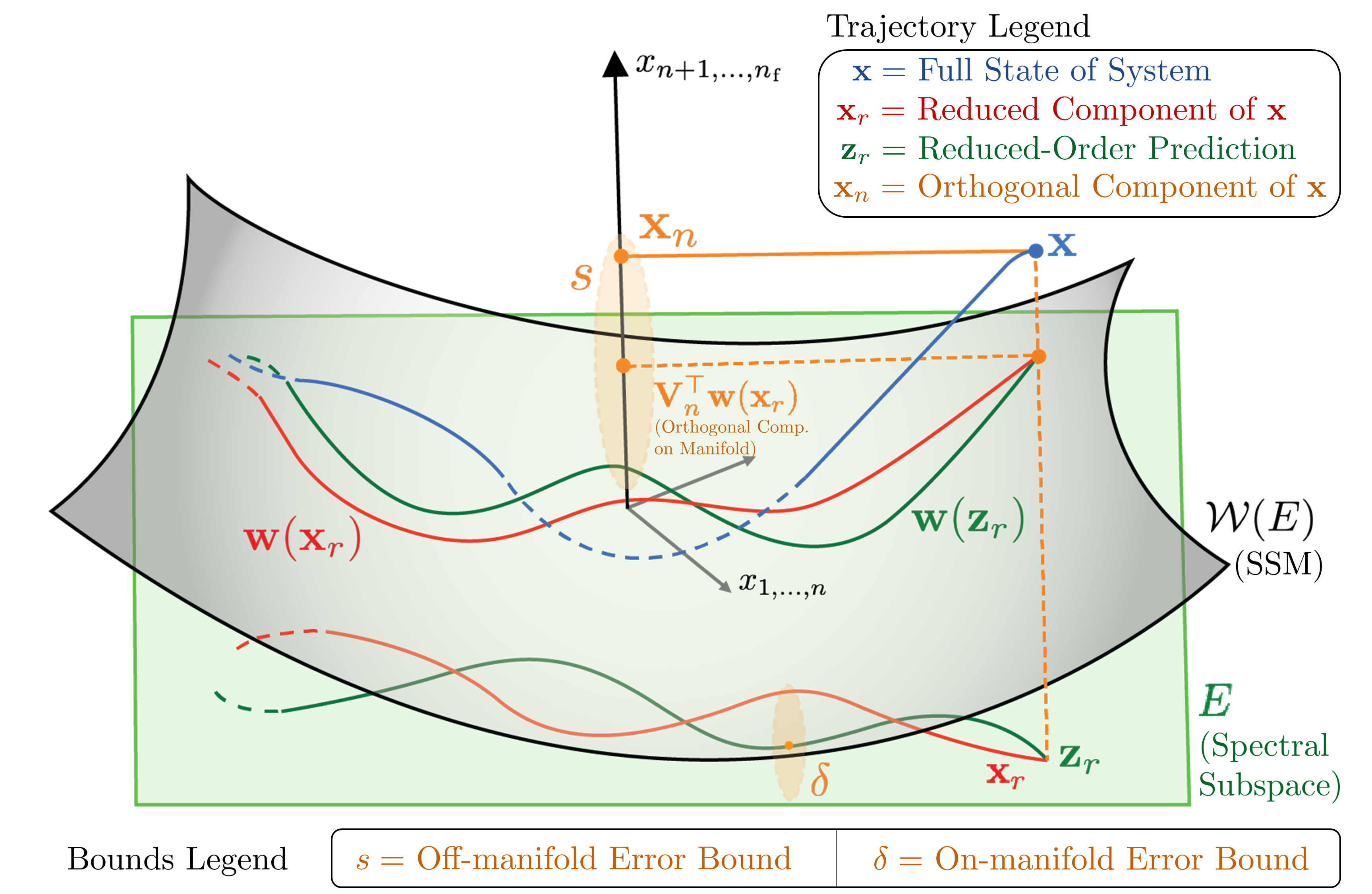}%
    \caption{We characterize the model reduction error of ROMs on invariant manifolds by decomposing the error into off-manifold ($\fstate_n - \orthogonal^\top\wmap(\rstate)$) and on-manifold components ($\rstate - \zstate$). We derive scalar error dynamics $s$ and $\delta$, which upper-bound the off-manifold and on-manifold error, respectively.}
    \label{fig:manifoldTube}
    \vspace{-0.65cm}
\end{figure}

\textbf{Statement of Contributions}:
In this work, we propose a new method for real-time control of high-dimensional, nonlinear systems that robustly satisfies constraints. 
We leverage recent advancements in model reduction, namely \textit{Spectral Submanifolds} (SSMs) to extract low-dimensional models suitable for real-time control. Our contributions are three-fold: 

(i) We quantify the modelling error due to SSM-based model reduction as depicted in Figure~\ref{fig:manifoldTube}. We derive an error bounding dynamical system of the model reduction error for reduced-order models (ROMs) that evolve on an invariant manifold of the autonomous system. 

(ii) We leverage these error bounds to design a robust, nonlinear reduced order model predictive control (RN-ROMPC) scheme that guarantees stability and robust constraint satisfaction. 

(iii) Lastly, we validate our approach via simulation on a $9768$-dimensional soft robot finite element model.

\textbf{Related Work}: 
Nonlinear model reduction provides a rigorous framework for constructing low-dimensional surrogate models for control. While these techniques have been applied successfully in various practical applications \cite{thieffry_control_2019, TonkensLorenzettiEtAl2021}, it is generally difficult to guarantee that the control scheme will be robust to model reduction error for generic classes of dynamical systems. Although several efforts have derived error bounds for ROMs \cite{perev_balanced_2012, koc_optimal_2021, buchfink_symplectic_2021}, they are often restricted to a limited range of applicable systems or require specific system structures that are difficult to verify. Furthermore, none of these approaches leverage their error bounds to achieve robust performance under constraints. 

Motivated by the successful application of SSMs to nonlinear model reduction and control \cite{AloraCenedeseEtAl2023}, we derive prediction error bounds for a general class of nonlinear systems. Specifically, we leverage the invariance property of SSMs to construct stable error bounds and design a control scheme that guarantees robust constraint satisfaction under model reduction error. While robust \textit{reduced order model predictive control} (ROMPC) schemes have been established for linear systems \cite{sopasakis2013, lohning_model_2014,lorenzetti2022linear}, we extend this line of work to develop a robust nonlinear ROMPC scheme. While there exist nonlinear robust MPC schemes \cite{falugi2013getting, sasfi2022robust,rakovic2022homothetic}, to the best of our knowledge, our effort constitutes the first line of work towards designing robust MPC schemes for nonlinear ROMs.

\section{PRELIMINARIES}
\label{sec:prelims}
\textbf{Notation}:
We denote $\frac{\partial \mathbf{f}}{\partial \mathbf{x}} \defeq \mathbf{f}'(\mathbf{x})$ as the Jacobian of a function $\mathbf{f}$ with respect to $\fstate$. The vector 2-norm and its induced matrix norm are denoted by $\norm{\cdot}$ while $\operatorname{Re}(z)$ denotes the real part of a complex number $z$.
For a Lipschitz continuous function $\mathbf{f}:\mathbb{R}^n\rightarrow\mathbb{R}^m$, $L_{\mathbf{f}}>0$ denotes its Lipschitz constant, i.e., $\norm{\mathbf{f}(x)-\mathbf{f}(z)}\leq L_{\mathbf{f}}\norm{x-z}$. Lastly, $\operatorname{alg}(\lambda)$ and $\operatorname{geom}(\lambda)$ denote the algebraic and geometric multiplicity of an eigenvalue $\lambda$, respectively. For brevity, we refer the reader to proofs of the lemmas and Proposition~\ref{prop:tighten} in the Appendix
\iftoggle{ext}{\ref{app:Appendix}.
}{of the extended version: {\small \url{https://tinyurl.com/2677ac9y}}}

\subsection{System Dynamics}
We consider the following high-dimensional nonlinear system with an equilibrium point at the origin
\begin{equation}
\begin{aligned}
\label{eq:FOM}
    \dot{\mathbf{\fstate}}(t) = \mathbf{A} \fstate (t) + \fnl(\fstate (t)) + \mathbf{B} \ctrl (t) + \dd(t),\\
\end{aligned}
\end{equation}
where $\nfstate \gg 1$ and $m$ are the dimensions of the full state and the input, respectively, $\mathbf{A} \in \R^{\nfstate \times \nfstate}$ and $\fnl : \R^\nfstate \rightarrow \R^\nfstate$ represent the linear and nonlinear parts of the dynamics, respectively, while $\mathbf{B} \in \R^{\nfstate \times m}$ represents the linear control matrix. The disturbance term $\dd(t)$ is assumed bounded \ie $\norm{\dd(t)} \leq \bar{d}$ for all $t \geq 0$.
We require that $\mathbf{A}$ and $\fnl$ satisfy the following assumptions.
\begin{assumption}[Asymptotic Stability and Semi-Simplicity]
\label{assum:stable}
$\mathbf{A}$ is a Hurwitz matrix, \ie each eigenvalue $\lambda_i$ of $\mathbf{A}$ has $\operatorname{Re}(\lambda_i) < 0$. Also, $\mathbf{A}$ is semi-simple \ie $\operatorname{alg}(\lambda_i) = \operatorname{geom}(\lambda_i)$.
\end{assumption}

\begin{assumption}[Analytic Nonlinearities]
\label{assum:smooth}
The nonlinear term, $\fnl \in \mathcal{C}^\infty$ satisfies $\fnl(\mathbf{0}) = \mathbf{0}, \, \fnl'(\mathbf{0}) = \mathbf{0}$ and
is $L_{\fnl}$-Lipschitz.
\end{assumption}

Assumption~\ref{assum:stable} requires that the origin is open-loop stable\footnote{More generally, this can be relaxed to stabilizability where a controller can be designed to stabilize the origin.} while semi-simplicity implies that $\mathbf{A}$ can be uniquely decomposed into a set of real, unique eigenspaces (cf.~\cite{haller2016nonlinear, ponsioen2020model}). 
Assumption~\ref{assum:smooth} requires that the system exhibit smooth behavior\footnote{Recent work \cite{Bettini2023nonsmooth} relaxes this assumption for mild discontinuities such as those due to dry friction, etc.}. 
These assumptions generally hold for many physical dissipative systems of interest, including soft robots, fluid flow, and chemical reactions.

\subsection{Problem Statement}
In this work, we consider the control of system \eqref{eq:FOM} subject to constraints on its inputs, $\ctrl$, and performance variables, $\y = \C \fstate \in \R^{n_{\y}}$, of the form 
$$\y(t) \in \mathcal{Y},\; \ctrl(t) \in \mathcal{U}, \quad t \geq 0,$$
where $\mathcal{Y}$ and $\mathcal{U}$ are compact sets. The performance constraint set $\mathcal{Y}$ is defined as 
\begin{equation}
    \mathcal{Y} :=\left\{ \y \in \R^{n_{\y}} \mid h_j(\y) \leq 0, \; j = 1, \ldots, n_{h}\right\} \label{eq:constraints}
\end{equation}
where $n_h \in \N$ represents the number of scalar constraints and each $h_j$ is $L_{h_j}$-Lipschitz.
We consider the problem of controlling System~\eqref{eq:FOM} to track dynamic trajectories $(\bar{\y}(t), \bar{\ctrl}(t))$ while satisfying the aforementioned constraints:
\begin{align*}
        \min_{\ctrl(\cdot), \fstate(\cdot)}&~ \int_{0}^{\infty} \ell(\y(\tau), \ctrl(\tau))\mathrm{d}\tau \\
    \mathrm{s.t.}~
        & \text{System}~\eqref{eq:FOM}, \\
        & \y = \C \fstate, \\
        & \y(t) \in \mathcal{Y}, \, \ctrl(t) \in \mathcal{U}.
\end{align*}
where $\ell$ is a positive definite stage cost with respect to $\y$ and $\ctrl$.

Unfortunately, the resulting optimal control problem (OCP) is computationally intractable since $\nfstate \gg 1$. We apply model reduction using SSMs to approximate System~\eqref{eq:FOM} with a low-dimensional surrogate model, then reason about the resulting model reduction error to design a computationally-tractable robust MPC scheme that ensures the high-dimensional system satisfies constraints in closed-loop. In the following, we summarize results on the existence of SSMs for System~\eqref{eq:FOM} and define some of its properties.

\subsection{SSM Basics}
Consider the autonomous part of System~\eqref{eq:FOM}, \ie $\dd\equiv 0$, $\ctrl\equiv 0$,
\begin{align}
\label{eq:FOM_aut}
    \dot{\mathbf{\fstate}}(t) = \mathbf{A} \fstate (t) + \fnl(\fstate (t)),
\end{align}
where each eigenvalue $\lambda_j$ of $\mathbf{A}$ corresponds to an eigenspace $E_j\subset \R^{\nfstate}$ spanned by its associated (generalized) eigenvectors. These eigenspaces are invariant subspaces for the linearized system. 

Since $\mathbf{A}$ is semi-simple, it is diagonalizable, and we may decompose it in block-diagonal form with real eigenvectors \cite{antoulas2005approximation}. 
Without loss of generality and for ease of exposition, we make the following assumption:
\begin{assumption}[Modal Coordinates]
    \label{assum:modalCoords}
    System~\eqref{eq:FOM} is in modal coordinates such that $\mathbf{A}$ is in real block-diagonal form whose blocks are ordered from slowest to fastest modes.
\end{assumption}
We emphasize that Assumption~\ref{assum:modalCoords} is made to simplify the exposition and, in practice, we do not need to diagonalize the full system~\eqref{eq:FOM} as we will discuss in Section~\ref{subsec:dataDriven}.

We now define the matrix $\mathbf{V} = \begin{bmatrix} \tangent, \orthogonal \end{bmatrix}\in\mathbb{R}^{\nfstate\times \nfstate}$ where $\tangent = \begin{bmatrix} \mathbf{I}_{\nrstate \times \nrstate}^\top, \mathbf{0}^\top_{\nfstate-\nrstate \times \nrstate} \end{bmatrix}^\top$ and 
$\orthogonal = \begin{bmatrix} \mathbf{0}_{\nrstate \times \nrstate}^\top, \mathbf{I}^\top_{\nfstate-\nrstate \times \nrstate} \end{bmatrix}^\top$. 
Under Assumption~\ref{assum:modalCoords}, the columns of $\tangent$ represent the eigenspace spanned by the $\nrstate$ slowest modes, which we denote as the \textit{spectral subspace}, $\spectralsubspace$, while the columns of $\orthogonal$ represent its complement. The spectrum of $\spectralsubspace$ is denoted $\Lambda_{\spectralsubspace}$, while the outer remaining eigenvalues of $\mathbf{A}$ are collected in the spectrum $\Lambda_{\mathrm{out}}$.
By Assumption~\ref{assum:modalCoords}, the linear matrix takes the form $\mathbf{A} = \mathrm{diag}(\Ar,\An)$, where $\Ar=\tangent^\top \mathbf{A}\tangent$ and $\An=\orthogonal^\top \mathbf{A}\orthogonal$. Thus, we always have that
\begin{align}
    \tangent^\top \tangent = \eye, \; \orthogonal^\top \tangent = \mathbf{0}, \tangent \tangent^\top + \orthogonal \orthogonal^\top = \eye, \label{eq:orthbasis}\\
    \tangent^\top \mathbf{A} \orthogonal = \mathbf{0}, \orthogonal^\top \mathbf{A} \tangent = \mathbf{0}. \label{eq:Ar}
\end{align}

Recent developments in nonlinear dynamics have shown the existence and uniqueness of smooth invariant structures for system \eqref{eq:FOM_aut}. These structures, known as SSMs, are the nonlinear extensions of the spectral subspaces of the linearization of \eqref{eq:FOM_aut}. The SSM corresponding to a spectral subspace $E$ can be defined as follows.

\begin{definition}
    \label{aut_SSM}
    An autonomous SSM $\mathcal{W}(E)$ corresponding to a spectral subspace $E$ is an invariant manifold of the autonomous part \eqref{eq:FOM_aut} of the nonlinear system \eqref{eq:FOM}, \ie when $\ctrl\equiv 0$, $\dd\equiv 0$,
    $$
    \mathbf{x}(0) \in \mathcal{W}(E) \Longrightarrow \mathbf{x}(t) \in \mathcal{W}(E), \quad \forall t \in \mathbb{R},
    $$
    such that,
    \begin{enumerate}
        \item $\mathcal{W}(E)$ is tangent to $E$ at the origin and has the same dimension as $E$,
        \item $\mathcal{W}(E)$ is strictly smoother than any other invariant manifold satisfying condition 1 above.
    \end{enumerate}
\end{definition}

SSMs as described in Definition~\ref{aut_SSM} are guaranteed to exist and to be unique as long as an additional non-resonance assumption holds.
\begin{assumption}[Non-Resonance Condition]
\label{assum:nonresonance}
The spectrum, $\Lambda_{\spectralsubspace}$, has no eigenvalue that is an integer combination of any eigenvalues in the outer spectrum $\Lambda_{\text{out}}$ (see \cite{haller2016nonlinear, cenedese2022ssmlearn} for details).
\end{assumption}
This assumption ensures that the nonlinear interactions between the slow modes in $\Lambda_{\spectralsubspace}$ and fast modes in $\Lambda_{\text{out}}$ are weak and that the ROM captures all strongly interacting modes. In general, it is generically satisfied, and we could also enlarge $E$ to contain all resonant modes of $\mathbf{A}$.

SSMs are effective for model reduction (for $\dd\equiv 0$, $\ctrl\equiv 0$) because trajectories of the full system are exponentially attracted to the manifold and synchronize with the slow dynamics evolving on it \cite{haller2016nonlinear}.

\subsection{Reduced Order Model}
We now construct a ROM on the SSM, $\mathcal{W}(E)$, corresponding to the $n$-slowest decaying modes.\footnote{We assume our constraint set $\mathcal{Y}$ is chosen non-restrictive enough such that $\left\{\C \fstate \mid \fstate \in \mathcal{W}(E)\right\} \cap \mathcal{Y} \neq \emptyset$.} We parameterize $\mathcal{W}(E)$ as a graph tangent to its spectral subspace, $E$, at the origin. Following the graph-style approach of \cite{cenedese2022ssmlearn}, our SSM parametrization is
\begin{equation}
    \begin{aligned}
    \label{eq:rom}
        \rstate(t) &= \vmap(\fstate(t)) \defeq \tangent^\top \fstate(t), \\
        \fstate(t) &= \wmap(\rstate(t)) \defeq \tangent \rstate(t) + \wnl(\rstate(t)), \\
    \end{aligned}
\end{equation}
where the mapping $\vmap$ projects the full state $\fstate$ onto the reduced coordinates in $E$, while the parameterization $\wmap$ maps the reduced state $\rstate$ onto the SSM $\mathcal{W}(E)$ in the full state space. 

By Definition~\ref{aut_SSM}, the graph parametrization \eqref{eq:rom} satisfies invertibility 
    \begin{equation}\label{eq:invert} 
        \begin{aligned} 
            \fstate = \wmap(\vmap(\fstate)) \text{ and } \rstate = \vmap(\wmap(\rstate)),
    \end{aligned} \end{equation}
and invariance, as stated in Definition~\ref{aut_SSM},
\begin{equation}\label{eq:invar} 
    \begin{aligned} 
        \mathbf{A} \wmap(\rstate) + \fnl(\wmap(\rstate)) =  \wmap'(\rstate) \rdotstate,
\end{aligned} \end{equation}
where $\rdotstate$ is evaluated for the autonomous dynamics ($\dd\equiv 0$, $\ctrl\equiv 0$) of the reduced system.

Using this, we now construct the reduced-order autonomous dynamics on $\mathcal{W}(E)$. These reduced dynamics approximate the behavior of the autonomous system~\eqref{eq:FOM_aut} using the slowest modes $\rstate$.
\begin{lemma}
\label{lem:redAutDyn}
    The reduced-order autonomous dynamics of System~\eqref{eq:FOM_aut} on the SSM, $\mathcal{W}(E)$, are
    \begin{equation}
    \begin{aligned}
    \label{eq:rautdyn}
        \rdotstate(t) &= \rom(\rstate(t)) \defeq \Ar \rstate (t) + \rnl(\rstate (t)).
    \end{aligned}
\end{equation}
\end{lemma}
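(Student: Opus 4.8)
The plan is to project the full autonomous dynamics onto the reduced coordinates via the parametrization \eqref{eq:rom} and then exploit the block-diagonal structure in \eqref{eq:Ar}. First I would differentiate the projection $\rstate = \vmap(\fstate) = \tangent\trans\fstate$ in time, obtaining $\rdotstate = \tangent\trans\dot{\fstate}$. Since invariance guarantees that a trajectory initialized on $\mathcal{W}(E)$ remains on it, I may substitute the autonomous dynamics \eqref{eq:FOM_aut} evaluated at $\fstate = \wmap(\rstate)$, which gives $\rdotstate = \tangent\trans\big(\mathbf{A}\wmap(\rstate) + \fnl(\wmap(\rstate))\big)$. Inserting $\wmap(\rstate) = \tangent\rstate + \wnl(\rstate)$ then reduces the task to evaluating the three pieces $\tangent\trans\mathbf{A}\tangent\rstate$, $\tangent\trans\mathbf{A}\wnl(\rstate)$, and the projected nonlinear remainder.

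The key structural fact I would establish is that the nonlinear part of the parametrization carries no component along the spectral subspace, i.e. $\tangent\trans\wnl(\rstate) = \mathbf{0}$ for all $\rstate$. This follows directly from the invertibility relation $\rstate = \vmap(\wmap(\rstate))$ in \eqref{eq:invert}: expanding $\vmap(\wmap(\rstate)) = \tangent\trans\big(\tangent\rstate + \wnl(\rstate)\big) = \rstate + \tangent\trans\wnl(\rstate)$ and using $\tangent\trans\tangent = \eye$ from \eqref{eq:orthbasis} forces $\tangent\trans\wnl(\rstate) = \mathbf{0}$. Consequently $\wnl(\rstate)$ lies in the range of $\orthogonal$, so $\tangent\trans\mathbf{A}\wnl(\rstate) = \tangent\trans\mathbf{A}\orthogonal\,\orthogonal\trans\wnl(\rstate)$ vanishes by the off-diagonal relation $\tangent\trans\mathbf{A}\orthogonal = \mathbf{0}$ in \eqref{eq:Ar}. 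Combined with $\tangent\trans\mathbf{A}\tangent = \Ar$, the linear contribution collapses to exactly $\Ar\rstate$.

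Defining the reduced nonlinearity as $\rnl(\rstate) \defeq \tangent\trans\fnl(\wmap(\rstate))$ then yields $\rdotstate = \Ar\rstate + \rnl(\rstate) = \rom(\rstate)$, as claimed. As an equivalent route I could instead left-multiply the invariance equation \eqref{eq:invar} by $\tangent\trans$ and use that differentiating $\tangent\trans\wnl(\rstate) = \mathbf{0}$ gives $\tangent\trans\wnl'(\rstate) = \mathbf{0}$, hence $\tangent\trans\wmap'(\rstate) = \eye$, which immediately isolates $\rdotstate$ on the right-hand side. I expect the main obstacle to be the bookkeeping around $\wnl$: one must verify that the graph parametrization indeed places the nonlinear correction entirely in the complement of $E$, since it is precisely this orthogonality that decouples the reduced linear dynamics from the fast modes and makes $\Ar$ appear cleanly.
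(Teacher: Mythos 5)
Your proposal is correct and follows essentially the same route as the paper's proof: differentiate $\rstate=\tangent\trans\fstate$, substitute the autonomous dynamics on the manifold, derive $\tangent\trans\wnl(\rstate)=\mathbf{0}$ from the invertibility relation \eqref{eq:invert}, and then kill $\tangent\trans\mathbf{A}\wnl(\rstate)$ using \eqref{eq:orthbasis} and \eqref{eq:Ar}. The only cosmetic difference is that you pass through $\wnl=\orthogonal\orthogonal\trans\wnl$ explicitly, whereas the paper manipulates $0=\Ar\tangent\trans\wnl$ directly; these are the same computation.
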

\begin{remark}
The fast modes denoted by $\nstate$ on the manifold reduce to $ \nstate = \orthogonal^\top \wmap(\rstate) \stackrel{\eqref{eq:orthbasis}}{=} \orthogonal^\top \wnl(\rstate)$.
\end{remark}

We leverage the fact that System~\eqref{eq:FOM} is smooth (see~Assumption~\ref{assum:smooth}) to pick an appropriate functional form for $\wnl$ and $\rnl$. In this work, we construct these mappings by Taylor series expansion which naturally leads to the following assumption on the form of $\wnl$.
\begin{assumption}[Smoothness of Parameterization]
    \label{assum:contdiff}
    The nonlinear term in the parameterization, $\wnl$ is continuously differentiable and $L_{\wnl}$-Lipschitz.
\end{assumption}
Note that from Assumptions \ref{assum:smooth} and \ref{assum:contdiff} we get that $\rnl$ is $L_{\rnl}$-Lipschitz, with $L_{\rnl} \leq L_{\fnl} \left( 1  + L_{\wnl}\right)$. Equipped with these properties, we can now derive rigorous prediction error bounds for the reduced-order dynamics.
\section{PREDICTION ERROR BOUNDS}
\label{sec:errorBounds}
In this section, we derive prediction error bounds for the ROM. Specifically, we introduce the effect of input and disturbance and decompose the error dynamics into off-manifold and on-manifold components as shown in Figure~\ref{fig:manifoldTube}.
We then use this decomposition to construct scalar error dynamics, which give bounds on the prediction error of the ROM.

In the following lemmas, we derive properties of the SSM parameterization \eqref{eq:rom} and reduced dynamics \eqref{eq:rautdyn} that will be useful to construct the form of the scalar error dynamics.

\begin{lemma}
    \label{lem:SSMproperties}
    For all $\rstate \in \R^{\nrstate}$, it holds that:
        \begin{subequations}
            \begin{align}
                \label{eq:VnVnWnl}
                &\orthogonal \orthogonal^\top \wnl(\rstate) = \wnl(\rstate),\\
                \label{eq:VnDWnl}
                &\orthogonal^\top \wnl'(\rstate) \rom(\rstate) = \orthogonal^\top \left( \mathbf{A} \wnl(\rstate) + \fnl(\wmap(\rstate)) \right).
            \end{align}
        \end{subequations}
\end{lemma}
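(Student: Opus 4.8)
The plan is to derive both identities directly from the graph structure of the parameterization \eqref{eq:rom} together with the orthogonality relations \eqref{eq:orthbasis}--\eqref{eq:Ar} and the invariance equation \eqref{eq:invar}. The conceptual key is that the nonlinear part $\wnl$ of the parameterization must live entirely in the range of $\orthogonal$, i.e., it carries no component along the spectral subspace $\spectralsubspace$; once this is established, \eqref{eq:VnVnWnl} is immediate, and \eqref{eq:VnDWnl} follows by projecting the invariance equation onto the orthogonal complement.

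For \eqref{eq:VnVnWnl}, I would first exploit the second invertibility relation in \eqref{eq:invert}, $\rstate = \vmap(\wmap(\rstate))$. Substituting the graph forms $\vmap(\fstate) = \tangent^\top \fstate$ and $\wmap(\rstate) = \tangent \rstate + \wnl(\rstate)$ yields $\rstate = \tangent^\top\tangent\rstate + \tangent^\top \wnl(\rstate)$, and since $\tangent^\top\tangent = \eye$ this forces $\tangent^\top \wnl(\rstate) = \mathbf{0}$ for all $\rstate$. Feeding this into the completeness relation $\tangent\tangent^\top + \orthogonal\orthogonal^\top = \eye$ from \eqref{eq:orthbasis} gives $\wnl(\rstate) = \tangent(\tangent^\top\wnl(\rstate)) + \orthogonal\orthogonal^\top\wnl(\rstate) = \orthogonal\orthogonal^\top\wnl(\rstate)$, which is exactly the claim.

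For \eqref{eq:VnDWnl}, I would start from the invariance equation \eqref{eq:invar}, $\mathbf{A}\wmap(\rstate) + \fnl(\wmap(\rstate)) = \wmap'(\rstate)\rom(\rstate)$, and substitute $\wmap(\rstate) = \tangent\rstate + \wnl(\rstate)$ together with its Jacobian $\wmap'(\rstate) = \tangent + \wnl'(\rstate)$. Left-multiplying the resulting equation by $\orthogonal^\top$ and collecting terms, the cross terms $\orthogonal^\top\mathbf{A}\tangent\rstate$ and $\orthogonal^\top\tangent\rom(\rstate)$ vanish by \eqref{eq:Ar} and \eqref{eq:orthbasis} respectively, leaving precisely $\orthogonal^\top\wnl'(\rstate)\rom(\rstate) = \orthogonal^\top(\mathbf{A}\wnl(\rstate) + \fnl(\wmap(\rstate)))$.

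Both steps are essentially bookkeeping with the projection identities, so I do not anticipate a genuine obstacle; the only point demanding care is recognizing that the invertibility relation --- rather than any smoothness or spectral argument --- is what pins $\wnl$ to the orthogonal complement, and that projecting the invariance equation with $\orthogonal^\top$ (and not $\tangent^\top$) is what isolates the off-manifold dynamics on which the subsequent error analysis relies.
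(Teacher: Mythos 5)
Your proposal is correct and follows essentially the same route as the paper: both establish $\tangent^\top\wnl(\rstate)=\mathbf{0}$ from the invertibility relation \eqref{eq:invert} and then use the completeness identity in \eqref{eq:orthbasis} for \eqref{eq:VnVnWnl}, and both obtain \eqref{eq:VnDWnl} by substituting the graph form of $\wmap$ into the invariance equation \eqref{eq:invar} and projecting with $\orthogonal^\top$. The only cosmetic difference is that the paper first isolates the nonlinear terms before projecting, whereas you project the full equation directly and let \eqref{eq:Ar} and \eqref{eq:orthbasis} kill the cross terms; the two are equivalent.
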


Lemma~\ref{lem:SSMproperties} allows us to derive the dynamics of the true system \eqref{eq:FOM} in the modal coordinates defined by $\mathbf{V} = \begin{bmatrix} \tangent, \orthogonal \end{bmatrix}$. This is done in the following lemma, where we decompose the dynamics into its slow and fast components.

\begin{lemma}
    \label{lem:trueDynModal}
    In modal coordinates, we can represent the controlled true system dynamics \eqref{eq:FOM} as follows
    \begin{equation} \label{eq:truemodal}
        \begin{gathered}
            \rdotstate = \rom(\rstate) + \tangent^\top \left(\manError(\fstate) + \B \ctrl +  \dd \right) \\
            \ndotstate = \An \nstate + \orthogonal^\top \left( \fnl(\fstate) + \B \ctrl +  \dd \right)
        \end{gathered}
    \end{equation}
    where $\manError(\fstate) \defeq \fnl(\fstate) - \fnl(\wmap(\rstate))$.
\end{lemma}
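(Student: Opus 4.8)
The plan is to obtain both lines of \eqref{eq:truemodal} by left-multiplying the full dynamics \eqref{eq:FOM} by the two constant projectors $\tangent^\top$ and $\orthogonal^\top$, and then simplifying the linear term via the block-diagonal structure \eqref{eq:Ar} and the nonlinear term via the on-/off-manifold split. Since $\rstate=\tangent^\top\fstate$ and $\nstate=\orthogonal^\top\fstate$ and both projectors are constant, differentiation is immediate: $\rdotstate = \tangent^\top(\mathbf{A}\fstate + \fnl(\fstate) + \B\ctrl + \dd)$ and $\ndotstate = \orthogonal^\top(\mathbf{A}\fstate + \fnl(\fstate) + \B\ctrl + \dd)$. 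Everything then reduces to simplifying the linear and nonlinear pieces of each projection.

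For the linear terms I would first resolve the identity, writing $\fstate = \tangent\tangent^\top\fstate + \orthogonal\orthogonal^\top\fstate = \tangent\rstate + \orthogonal\nstate$ using $\tangent\tangent^\top + \orthogonal\orthogonal^\top = \eye$ from \eqref{eq:orthbasis}. Substituting and invoking the vanishing cross-blocks $\tangent^\top\mathbf{A}\orthogonal=\mathbf{0}$, $\orthogonal^\top\mathbf{A}\tangent=\mathbf{0}$ from \eqref{eq:Ar}, together with $\Ar=\tangent^\top\mathbf{A}\tangent$ and $\An=\orthogonal^\top\mathbf{A}\orthogonal$, collapses the two linear parts to $\tangent^\top\mathbf{A}\fstate = \Ar\rstate$ and $\orthogonal^\top\mathbf{A}\fstate = \An\nstate$. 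The fast line is then already finished, because the remaining terms $\orthogonal^\top(\fnl(\fstate)+\B\ctrl+\dd)$ are left in full (the fast equation keeps the \emph{total} nonlinearity $\fnl(\fstate)$ and needs no manifold decomposition), so assembling gives $\ndotstate = \An\nstate + \orthogonal^\top(\fnl(\fstate)+\B\ctrl+\dd)$ directly.

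The only nontrivial step is the slow nonlinear term. Here I would add and subtract the on-manifold value $\fnl(\wmap(\rstate))$ to get $\tangent^\top\fnl(\fstate) = \tangent^\top\fnl(\wmap(\rstate)) + \tangent^\top\manError(\fstate)$, where $\manError(\fstate)=\fnl(\fstate)-\fnl(\wmap(\rstate))$ is exactly the defined error. It then remains to identify $\tangent^\top\fnl(\wmap(\rstate))$ with $\rnl(\rstate)$, after which the slow line assembles as $\rdotstate = \Ar\rstate + \rnl(\rstate) + \tangent^\top(\manError(\fstate)+\B\ctrl+\dd) = \rom(\rstate) + \tangent^\top(\manError(\fstate)+\B\ctrl+\dd)$, as claimed.

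I expect that identification $\rnl(\rstate)=\tangent^\top\fnl(\wmap(\rstate))$ to be the main (and only) point needing care. It is already the content of Lemma~\ref{lem:redAutDyn}, so one option is simply to cite it. For a self-contained argument I would instead project the invariance relation \eqref{eq:invar} onto $\tangent^\top$: using $\wmap(\rstate)=\tangent\rstate+\wnl(\rstate)$ and the fact from \eqref{eq:VnVnWnl} that $\wnl(\rstate)=\orthogonal\orthogonal^\top\wnl(\rstate)$ — so that $\tangent^\top\wnl(\rstate)=\mathbf{0}$ and, differentiating this identity, $\tangent^\top\wnl'(\rstate)=\mathbf{0}$ — both terms $\tangent^\top\mathbf{A}\wnl(\rstate)$ and $\tangent^\top\wnl'(\rstate)\rom(\rstate)$ vanish, leaving $\Ar\rstate + \tangent^\top\fnl(\wmap(\rstate)) = \rom(\rstate)=\Ar\rstate+\rnl(\rstate)$, i.e. $\rnl(\rstate)=\tangent^\top\fnl(\wmap(\rstate))$. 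This confirms the identification and closes the proof; all remaining manipulations are routine bookkeeping of the two projections.
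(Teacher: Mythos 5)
Your proposal is correct and follows essentially the same route as the paper: project \eqref{eq:FOM} onto $\tangent^\top$ and $\orthogonal^\top$, use \eqref{eq:orthbasis} and \eqref{eq:Ar} to reduce the linear terms to $\Ar\rstate$ and $\An\nstate$, and add/subtract $\fnl(\wmap(\rstate))$ in the slow equation to expose $\manError(\fstate)$, with the identification $\rom(\rstate)=\Ar\rstate+\tangent^\top\fnl(\wmap(\rstate))$ supplied by \eqref{eq:rautdyn} (Lemma~\ref{lem:redAutDyn}), exactly as the paper does. Your optional self-contained derivation of $\rnl(\rstate)=\tangent^\top\fnl(\wmap(\rstate))$ merely reproduces the content of Lemma~\ref{lem:redAutDyn}'s proof and is a valid alternative to citing it.
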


Lemma~\ref{lem:trueDynModal} puts the dynamics of System~\eqref{eq:FOM} in a convenient form for analysis as it reveals how the error and disturbances contribute to dynamics on and off the manifold. For example, notice that there is a form of residual error, $\manError(\fstate)$, in the reduced dynamics. If the true system remains on the manifold for all time \ie $\fstate = \wmap(\rstate)$, then $\manError(\fstate)$ is exactly zero. If the true system is ever off the manifold (as shown in Figure~\ref{fig:manifoldTube}), then the effect of the faster modes, $\nstate$, results in a disturbance in the reduced coordinates. 
In this case, the dynamics of the true system are no longer synchronized with the reduced dynamics on the manifold. Additionally, the effect of control acts as a disturbance that, in some directions, pushes the true system off the manifold. Figure~\ref{fig:manifoldTube} depicts this interplay between these reduced and orthogonal dynamics. 

Using this insight, we now introduce a linear input to Equation~\eqref{eq:rautdyn} and seek to characterize the error dynamics of the controlled prediction model. 
We define the nominal dynamics by ignoring the disturbance $\dd$ and assuming that $\fstate=\wmap(\rstate)$, resulting in the prediction model
\begin{equation} \label{eq:nominal}
    \begin{aligned}
        \zdotstate \defeq& \Ar \zstate + \rnl(\zstate) + \Br \ctrl, \\
    \end{aligned}
\end{equation}
where $\zstate$ is the prediction of $\rstate$ used in MPC, while $\rstate$ denotes the (unknown) true reduced state, as shown in Figure~\ref{fig:manifoldTube}. The matrix $\Br \defeq \tangent^\top \mathbf{B}$ is the projection of the linear control matrix onto the reduced coordinates while $\Bn \defeq \orthogonal^\top \mathbf{B}$ is the projection of the control onto the orthogonal space.

To construct tubes around the nominal dynamics, we must upper bound the error $\manError(\fstate)$. 

\begin{lemma}
    \label{lem:manError}
    The residual term in the orthogonal direction $\manError(\fstate)$ defined in \eqref{eq:truemodal} is upper bounded by the distance between the orthogonal component of the full state and its corresponding orthogonal component on the manifold,
    \begin{equation}
        \label{eq:manError}
        \norm{\manError(\fstate)} \leq L_{\fnl} \norm{\nstate - \orthogonal^\top \wnl(\rstate)},
    \end{equation}
    where $L_{\fnl}$ is the Lipschitz constant of $\fnl(\fstate)$ introduced in assumption \ref{assum:smooth}.
\end{lemma}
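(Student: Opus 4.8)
The plan is to combine the Lipschitz property of $\fnl$ from Assumption~\ref{assum:smooth} with the graph structure of the parameterization \eqref{eq:rom} and the projection identity \eqref{eq:VnVnWnl} from Lemma~\ref{lem:SSMproperties}. First I would note that since $\rstate = \vmap(\fstate) = \tangent^\top \fstate$ is the true reduced coordinate of $\fstate$, the residual $\manError(\fstate) = \fnl(\fstate) - \fnl(\wmap(\rstate))$ compares $\fnl$ evaluated at the true state and at its projection onto the manifold. Applying the $L_{\fnl}$-Lipschitz bound directly gives $\norm{\manError(\fstate)} \leq L_{\fnl}\norm{\fstate - \wmap(\rstate)}$, so the task reduces to showing $\norm{\fstate - \wmap(\rstate)} = \norm{\nstate - \orthogonal^\top \wnl(\rstate)}$.

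Next I would expand $\fstate - \wmap(\rstate)$ explicitly. Using $\wmap(\rstate) = \tangent\rstate + \wnl(\rstate)$ with $\rstate = \tangent^\top \fstate$, together with the resolution of identity $\tangent\tangent^\top + \orthogonal\orthogonal^\top = \eye$ from \eqref{eq:orthbasis}, the tangential part $\tangent\rstate = \tangent\tangent^\top\fstate$ cancels the corresponding component of $\fstate$, leaving $\fstate - \wmap(\rstate) = \orthogonal\orthogonal^\top\fstate - \wnl(\rstate) = \orthogonal\nstate - \wnl(\rstate)$, where $\nstate = \orthogonal^\top\fstate$ is the fast-mode coordinate of the full state. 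The crucial step is then to invoke property \eqref{eq:VnVnWnl}, $\orthogonal\orthogonal^\top\wnl(\rstate) = \wnl(\rstate)$, which lets me rewrite $\wnl(\rstate) = \orthogonal\big(\orthogonal^\top\wnl(\rstate)\big)$ and factor the common $\orthogonal$ out, obtaining $\fstate - \wmap(\rstate) = \orthogonal\big(\nstate - \orthogonal^\top\wnl(\rstate)\big)$.

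Finally, since $\orthogonal$ has orthonormal columns ($\orthogonal^\top\orthogonal = \eye$), left-multiplication by $\orthogonal$ is an isometry in the $2$-norm, so $\norm{\orthogonal(\nstate - \orthogonal^\top\wnl(\rstate))} = \norm{\nstate - \orthogonal^\top\wnl(\rstate)}$; chaining this equality with the Lipschitz bound yields \eqref{eq:manError}. I do not expect a genuine obstacle here, as the argument is a short chain of equalities followed by a single inequality. The only point requiring care is conceptual rather than technical: recognizing that the entire discrepancy $\fstate - \wmap(\rstate)$ lives in the orthogonal subspace $\mathrm{range}(\orthogonal)$ once the manifold identity \eqref{eq:VnVnWnl} is used, which is exactly what collapses the ambient norm to the fast-mode error $\norm{\nstate - \orthogonal^\top\wnl(\rstate)}$ appearing in the statement.
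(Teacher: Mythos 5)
Your proposal is correct and follows essentially the same route as the paper's proof: apply the $L_{\fnl}$-Lipschitz bound, decompose $\fstate - \wmap(\rstate)$ in the modal coordinates, use \eqref{eq:VnVnWnl} to factor out $\orthogonal$, and conclude by the isometry of $\orthogonal$. No gaps.
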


The off-manifold error dynamics between our control prediction model \eqref{eq:nominal} and the true system \eqref{eq:FOM} is given by the following lemma.
\begin{lemma}
\label{lem:orthError}
    Denote the difference between the orthogonal component of the full state $\fstate$ and its corresponding fast state on the manifold as
    \begin{align}
    \label{eq:x_n_tilde}
    \xnerror \defeq \nstate - \orthogonal^\top \wnl(\rstate).     
    \end{align}
    The off-manifold error dynamics then take the form
    \begin{equation} \label{eq:xndoterror}
        \xndoterror = \An \xnerror + \orthogonal^\top \left(\eye  - \wnl'(\rstate) \tangent^\top \right) \left( \mathbf{B} \ctrl + \manError(\fstate) + \dd \right).
    \end{equation}
\end{lemma}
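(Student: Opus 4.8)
The plan is to differentiate the definition \eqref{eq:x_n_tilde} of $\xnerror$ directly and then substitute the modal dynamics established in Lemma~\ref{lem:trueDynModal}. Since $\orthogonal$ is a constant matrix, the chain rule gives $\xndoterror = \ndotstate - \orthogonal^\top \wnl'(\rstate)\rdotstate$. First I would insert the two rows of \eqref{eq:truemodal}, namely $\ndotstate = \An \nstate + \orthogonal^\top(\fnl(\fstate) + \B\ctrl + \dd)$ and $\rdotstate = \rom(\rstate) + \tangent^\top(\manError(\fstate) + \B\ctrl + \dd)$. This immediately produces an $\An\nstate$ term, the three input/disturbance/error contributions projected through $\orthogonal^\top$, and a cluster of terms multiplied by $\orthogonal^\top\wnl'(\rstate)$.

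The key step is to collapse the term $\orthogonal^\top\wnl'(\rstate)\rom(\rstate)$ using the invariance identity \eqref{eq:VnDWnl}, which replaces it with $\orthogonal^\top(\mathbf{A}\wnl(\rstate) + \fnl(\wmap(\rstate)))$. This is where the invariance of the manifold enters, and it is essential: the resulting $-\orthogonal^\top\fnl(\wmap(\rstate))$ combines with the $\orthogonal^\top\fnl(\fstate)$ term coming from $\ndotstate$ to form precisely $\orthogonal^\top\manError(\fstate)$, by the definition $\manError(\fstate) = \fnl(\fstate) - \fnl(\wmap(\rstate))$ in \eqref{eq:truemodal}. After this cancellation, the only remaining nonlinear-parameterization term outside the $\wnl'(\rstate)$ cluster is $-\orthogonal^\top\mathbf{A}\wnl(\rstate)$.

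It then remains to recognize $-\orthogonal^\top\mathbf{A}\wnl(\rstate)$ as the missing half of $\An\xnerror$. Using $\An = \orthogonal^\top\mathbf{A}\orthogonal$ together with the projection identity \eqref{eq:VnVnWnl}, $\orthogonal\orthogonal^\top\wnl(\rstate) = \wnl(\rstate)$, I can write $\An\orthogonal^\top\wnl(\rstate) = \orthogonal^\top\mathbf{A}\orthogonal\orthogonal^\top\wnl(\rstate) = \orthogonal^\top\mathbf{A}\wnl(\rstate)$. Substituting this back lets $\An\nstate - \An\orthogonal^\top\wnl(\rstate) = \An\xnerror$ appear. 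Finally I would factor the three terms $\B\ctrl$, $\manError(\fstate)$, and $\dd$ — each of which now occurs once with coefficient $\orthogonal^\top$ and once with coefficient $-\orthogonal^\top\wnl'(\rstate)\tangent^\top$ — into the common projector $\orthogonal^\top(\eye - \wnl'(\rstate)\tangent^\top)$, yielding exactly \eqref{eq:xndoterror}.

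I expect the main obstacle to be purely in the bookkeeping: tracking which $\fnl$ evaluations cancel, and ensuring the $\tangent^\top$ factors attached to the $\wnl'(\rstate)$ cluster line up so that the input, residual error, and disturbance can be grouped under the single projector $\orthogonal^\top(\eye - \wnl'(\rstate)\tangent^\top)$. The one genuinely non-mechanical move is invoking \eqref{eq:VnDWnl} at the right moment, since without the invariance relation the $\fnl(\wmap(\rstate))$ residual cannot be eliminated and the clean $\manError$ structure would not emerge.
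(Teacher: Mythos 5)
Your proposal is correct and follows essentially the same route as the paper's proof: differentiate $\xnerror$ via the chain rule, substitute the modal dynamics \eqref{eq:truemodal}, invoke the invariance identity \eqref{eq:VnDWnl} to eliminate $\fnl(\wmap(\rstate))$ and produce $\manError(\fstate)$, and use \eqref{eq:VnVnWnl} with $\An=\orthogonal^\top\mathbf{A}\orthogonal$ to assemble $\An\xnerror$ before factoring the common projector. The only difference is cosmetic — the paper first computes $\tfrac{\mathrm{d}}{\mathrm{d}t}(\orthogonal^\top\wnl(\rstate))$ as a separate display and then subtracts, whereas you carry out the subtraction in one pass.
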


Notice that the combined effect of error, input, and disturbance, \ie $\mathbf{B} \ctrl + \manError(\fstate) + \dd$, dictate how much the true system deviates from the manifold. In the linear case ($\wnl'(\rstate)=0$) the error term only affects the system in the direction $\orthogonal^\top$, orthogonal to the spectral subspace. The nonlinear case is similar but with an added term ($- \wnl'(\rstate) \tangent^\top$) that accounts for the curvature of the manifold. 

We can now derive scalar error dynamics, which bound both off and on-manifold errors. This construction allows us to dynamically change the size of the uncertainty tube around our predictions commensurate with the magnitudes of the input, error, and disturbance.

\begin{proposition}
\label{prop:tubedyn}
    Consider any initial condition $\fstate(0)\in\mathbb{R}^{n_f}$, $\zstate(0)\in\mathbb{R}^{\nrstate}$, $s(0),\delta(0)\geq 0$, such that $\norm{\xnerror(0)} \leq s(0)$ and $\norm{\rstate(0) - \zstate(0)} \leq \delta(0)$.
    Then, for any input signal $u(t)$ and any disturbance $\dd(t)$, $\|\dd(t)\|\leq \overline{d}$, it holds that
    \begin{subequations} \label{eq:tube_bounds}
        \begin{align}
            \norm{\xnerror(t)} &\leq s(t), \quad t \geq 0, \label{eq:s_bound}\\
            \norm{\rstate(t) - \zstate(t)} &\leq \delta(t), \quad t \geq 0,
        \end{align}
    \end{subequations}
    for trajectories $\rstate(t)$, $\zstate(t)$, $\xnerror(t)$ satisfying \eqref{eq:truemodal}, \eqref{eq:nominal}, and \eqref{eq:xndoterror} respectively, and $s(t)$, $\delta(t)$ satisfying
    \begin{subequations}
    \label{eq:s_delta_dynamics}
        \begin{align}
            \dot{s} =& \lambda_{\An} s + (1 + L_{\wnl}) (L_{\fnl} s + \bar{d}) \label{eq:s_dynamics}\\
            &\quad\quad\quad\quad\quad\quad\, + \norm{ \Bn \ctrl} + L_{\wnl} \norm{\Br \ctrl } \nonumber \\
            \dot{\delta} =& (\lambda_{\Ar} + L_{\rnl})\delta + L_{\fnl} s + \bar{d}. \label{eq:delta_dynamics}
        \end{align}
    \end{subequations} 
\end{proposition}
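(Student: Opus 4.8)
The plan is to treat both bounds as differential inequalities and close them with the standard comparison lemma. The key analytic tool is that for any trajectory obeying $\dot{v} = \mathbf{M} v + g(t)$, the upper Dini derivative of its norm satisfies $D^+\|v\| \le \mu(\mathbf{M})\,\|v\| + \|g(t)\|$, where $\mu(\mathbf{M}) = \tfrac{1}{2}\lambda_{\max}(\mathbf{M} + \mathbf{M}^\top)$ is the logarithmic norm; this follows from differentiating $\|v\|^2$ and bounding $v^\top \mathbf{M} v \le \mu(\mathbf{M})\|v\|^2$. I read $\lambda_{\An}$ and $\lambda_{\Ar}$ as precisely these logarithmic norms of $\An$ and $\Ar$, which are readily available because Assumption~\ref{assum:modalCoords} puts the dynamics in real block-diagonal form (so the symmetric part is block-diagonal and $\mu$ reduces to the largest real part of the corresponding spectrum). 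I would also record the norm facts $\|\tangent^\top\| = \|\orthogonal^\top\| = 1$ (orthonormal columns, \eqref{eq:orthbasis}) and $\|\orthogonal^\top \wnl'(\rstate)\| \le L_{\wnl}$ (Assumption~\ref{assum:contdiff}), which I use repeatedly.

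For the off-manifold bound, I start from the error dynamics \eqref{eq:xndoterror} of Lemma~\ref{lem:orthError} and apply the Dini inequality with $\mathbf{M} = \An$, giving $D^+\|\xnerror\| \le \lambda_{\An}\|\xnerror\| + \|g\|$ where $g = \orthogonal^\top(\eye - \wnl'(\rstate)\tangent^\top)(\B\ctrl + \manError(\fstate) + \dd)$. The crux is to split $g$ so the control term stays sharp: using $\orthogonal^\top\B = \Bn$ and $\tangent^\top\B = \Br$, the control contribution is $\Bn\ctrl - \orthogonal^\top\wnl'(\rstate)\Br\ctrl$, bounded by $\|\Bn\ctrl\| + L_{\wnl}\|\Br\ctrl\|$; the disturbance-and-residual contribution is $\orthogonal^\top(\eye - \wnl'(\rstate)\tangent^\top)(\manError(\fstate)+\dd)$, bounded by $(1+L_{\wnl})(\|\manError(\fstate)\| + \bar d)$ via the triangle inequality and the norm facts above. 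Invoking Lemma~\ref{lem:manError} to write $\|\manError(\fstate)\| \le L_{\fnl}\|\xnerror\|$ yields $D^+\|\xnerror\| \le \lambda_{\An}\|\xnerror\| + (1+L_{\wnl})(L_{\fnl}\|\xnerror\| + \bar d) + \|\Bn\ctrl\| + L_{\wnl}\|\Br\ctrl\|$. Since the right-hand side, read as a function of $\|\xnerror\|$, is exactly the right-hand side of \eqref{eq:s_dynamics} evaluated at $s = \|\xnerror\|$ and is affine (hence Lipschitz and monotone) in that argument, the comparison lemma with $\|\xnerror(0)\| \le s(0)$ delivers $\|\xnerror(t)\| \le s(t)$ for all $t \ge 0$.

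The on-manifold bound then follows by the same template, but it must be argued second because its forcing depends on the $s$-bound just established. Subtracting the nominal model \eqref{eq:nominal} from the true reduced dynamics \eqref{eq:truemodal} (and using $\rom(\cdot) = \Ar(\cdot) + \rnl(\cdot)$ from Lemma~\ref{lem:redAutDyn}), the control enters both as $\Br\ctrl$ and cancels, leaving $\dot{e} = \Ar e + (\rnl(\rstate) - \rnl(\zstate)) + \tangent^\top(\manError(\fstate) + \dd)$ for $e = \rstate - \zstate$. Applying the Dini inequality with $\mathbf{M} = \Ar$, the $L_{\rnl}$-Lipschitz bound on $\rnl$, $\|\tangent^\top\| = 1$, Lemma~\ref{lem:manError}, and the already-proven $\|\xnerror\| \le s$ gives $D^+\|e\| \le (\lambda_{\Ar} + L_{\rnl})\|e\| + L_{\fnl}s + \bar d$, which matches \eqref{eq:delta_dynamics}; the comparison lemma with $\|e(0)\| \le \delta(0)$ finishes the proof. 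The main obstacle I anticipate is not any single estimate but the care needed at two junctions: justifying the logarithmic-norm/Dini step rigorously (norms are only one-sided differentiable, and $\lambda_{\An}$, $\lambda_{\Ar}$ must be the matrix measures rather than mere spectral abscissae for the $2$-norm estimate to hold), and handling the self-referential and sequential structure — the $s$-inequality bounds $\|\xnerror\|$ in terms of $s$ through $\manError$ itself, so monotonicity of the comparison system is what legitimately closes the loop, and only afterward may $s$ be substituted into the $\delta$-dynamics.
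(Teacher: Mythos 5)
Your proposal is correct and follows essentially the same route as the paper's proof: bound the quadratic form $v^\top \mathbf{M} v \le \lambda\,\norm{v}^2$ (the paper writes this via the symmetric part exactly as your logarithmic-norm identity), split the control term into its $\Bn\ctrl$ and $\wnl'(\rstate)\tangent^\top\B\ctrl = \wnl'(\rstate)\Br\ctrl$ pieces, invoke Lemma~\ref{lem:manError}, and close each bound with the comparison lemma, doing $s$ before $\delta$. Your use of Dini derivatives merely formalizes the step the paper handles by assuming the error is nonzero "for simplicity," so the two arguments are the same in substance.
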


\begin{proof}
First, notice that $\xnerror^\top \An \xnerror = \frac{1}{2}\xnerror^\top (\An+\An^\top) \xnerror \leq \lambda_{\An} \norm{\xnerror}^2$, where $\lambda_{\An}$ is the real part of the slowest eigenvalue of $\An$. Similarly, we have $(\rstate - \zstate)^\top \Ar (\rstate - \zstate) \leq \lambda_{\Ar} \norm{\rstate - \zstate}^2$, where $\lambda_{\Ar}$ is the real part of the largest eigenvalue of $\Ar$.

Assume for simplicity that $\xnerror \neq 0$, then we have that
\begin{align*}
    &\frac{\mathrm{d}}{\mathrm{d}t}\norm{\nstate - \orthogonal^\top \wnl(\rstate)} = \frac{\xnerror^\top \xndoterror}{\norm{\xnerror}} \\
    \stackrel{\eqref{eq:xndoterror}}{=}& \frac{\xnerror^\top }{\norm{\xnerror}} \left( \An \xnerror + \orthogonal^\top \left(\eye  - \wnl'(\rstate) \tangent^\top \right) \left( \mathbf{B} \ctrl + \manError(\fstate)+\dd \right) \right)\\
    \leq& \lambda_{\An} \norm{\xnerror} + \norm{\orthogonal^\top \left(\eye-\wnl'(\rstate) \tangent^\top \right) (\manError(\fstate) +\dd)}  \\
    &+ \norm{\orthogonal^\top \left(\eye  - \wnl'(\rstate) \tangent^\top \right)\mathbf{B} \ctrl}\\
    \leq& \lambda_{\An} \norm{\xnerror} + \norm{\eye-\wnl'(\rstate) \tangent^\top} \norm{\manError(\fstate) + \dd}   \\
    &+ \norm{ \Bn \ctrl} + \norm{\wnl'(\rstate)} \norm{\Br \ctrl }\\
    \stackrel{\eqref{eq:manError}}{\leq}& \lambda_{\An} \norm{\xnerror} + \left(1 + \norm{\wnl'(\rstate)}\right) (L_{\fnl} \norm{\xnerror} + \norm{\dd})  \\
    &+ \norm{ \Bn \ctrl} + \norm{\wnl'(\rstate)} \norm{\Br \ctrl }\\
    \leq&  \lambda_{\An} \norm{\xnerror} + (1 + L_{\wnl}) (L_{\fnl} \norm{\xnerror} + \bar{d})\\
    &+ \norm{ \Bn \ctrl} + L_{\wnl} \norm{\Br \ctrl }
\end{align*}
where the third line is due to Cauchy-Schwarz.
The fourth line uses the fact that $\mathbf{V}$ is orthonormal, so $\norm{\orthogonal^\top} =1$ and $\norm{\tangent^\top} =1$, while the last inequality is due to Assumption~\ref{assum:contdiff}.
Let $s(t)$ be the solution of \eqref{eq:s_dynamics} with $s(0) \geq \norm{\xnerror(0)}$. By the comparison lemma \cite[Lemma 3.4]{khalil2002nonlinear}, 
$$\norm{\xnerror(t)} \leq s(t), \quad t \geq 0.$$

Again, assume for simplicity that $\rstate(t) - \zstate(t)\neq 0$, then, the scalar error dynamics on the manifold can be derived as
\begin{align*}
   &\frac{\mathrm{d}}{\mathrm{d}t}\norm{\rstate - \zstate}
   = \frac{(\rstate - \zstate)^\top (\rdotstate - \zdotstate)}{\norm{\rstate - \zstate}} \\
   \stackrel{\eqref{eq:truemodal}, \, \eqref{eq:nominal}}{=}& \frac{(\rstate - \zstate)^\top }{\norm{\rstate - \zstate}}\left(\Ar (\rstate - \zstate) \right. \\ 
   & \left.+ \rnl(\rstate) - \rnl(\zstate) + \tangent^\top  (\manError(\fstate) + \dd )\right) \\
   \leq& \lambda_{\Ar} \norm{\rstate - \zstate}  + L_{\rnl}\norm{\rstate - \zstate} + \norm{\tangent^\top} \norm{\manError(\fstate) + \dd}\\
   \stackrel{\eqref{eq:manError}}{\leq}&(\lambda_{\Ar} + L_{\rnl})\norm{\rstate - \zstate} + L_{\fnl} \norm{\xnerror} + \norm{\dd}\\
   \stackrel{\eqref{eq:s_bound}}{\leq}&(\lambda_{\Ar} + L_{\rnl})\norm{\rstate - \zstate} + L_{\fnl} s + \bar{d}.
\end{align*}
Similarly, let $\delta(t)$ be a solution of \eqref{eq:delta_dynamics} with $\delta(0) \geq \norm{\rstate(0) - \zstate(0)}$. By the comparison lemma, 
\begin{align*}
    \norm{\rstate(t) - \zstate(t)} &\leq \delta(t), \quad t \geq 0. \qedhere
\end{align*}
\end{proof}
These dynamic uncertainty tubes allow us to design MPC schemes that are less conservative than robust schemes which rely on worse-case analysis, such as rigid-tube MPC.

In the following sections, we will use $s$ and $\delta$ as constraint tightening tubes to ensure constraint satisfaction. The tube dynamics \eqref{eq:s_dynamics} and \eqref{eq:delta_dynamics} can be made stable by constraining the system sufficiently close to the origin until the Lipschitz constants satisfy $\lambda_{\An} + L_{\fnl}(1 + L_{\wnl}) \leq 0$ and $\lambda_{\Ar} + L_{\rnl} \leq 0$. In deriving \eqref{eq:s_dynamics}, we separated the input into its $\Bn \ctrl$ and $\Br \ctrl$ components to exploit the known directional information and get a tighter bound.

\section{Robust MPC}
\label{sec:robustMPC}
In the following, we use the tube dynamics (Prop.~\ref{prop:tubedyn}) to derive a robust MPC formulation. 
To this end, the following proposition shows we can ensure constraint satisfaction by posing more restrictive tightened constraints on the prediction state $\zstate$.
\begin{proposition} \label{prop:tighten}
    Suppose $\norm{\xnerror} \leq s$, $\norm{\rstate - \zstate} \leq \delta$, and 
    \begin{equation} \label{eq:tighten}
    h_j(\C \wmap(\zstate)) + L_{h_j} L_{\C \wmap}\delta + L_{h_j} \norm{\C}s \leq 0
    \end{equation}
    $\forall j = 1, \ldots, n_{h}$, where $L_{\C \wmap}$, $L_{h_j}$ are the Lipschitz constants of $\C \wmap$ and $h_j$, respectively. Then, $\y = \C\fstate$ satisfies the constraints \eqref{eq:constraints}.
\end{proposition}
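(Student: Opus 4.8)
The plan is to bound $h_j(\C\fstate)$ by inserting two intermediate terms and applying the triangle inequality together with Lipschitz continuity. Telescoping $h_j(\C\fstate)$ through the point $\C\wmap(\rstate)$ — the lift of the \emph{true} reduced state $\rstate$ onto the manifold — gives
\begin{align*}
h_j(\C\fstate) = &\;\big(h_j(\C\fstate) - h_j(\C\wmap(\rstate))\big) \\
&+ \big(h_j(\C\wmap(\rstate)) - h_j(\C\wmap(\zstate))\big) + h_j(\C\wmap(\zstate)).
\end{align*}
The second parenthesized difference is immediate: since $\C\wmap$ is $L_{\C\wmap}$-Lipschitz and each $h_j$ is $L_{h_j}$-Lipschitz, it is at most $L_{h_j}L_{\C\wmap}\norm{\rstate-\zstate}\leq L_{h_j}L_{\C\wmap}\delta$ by hypothesis.

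The key step is bounding the first difference, which requires quantifying the off-manifold distance $\norm{\C\fstate-\C\wmap(\rstate)}$. Here I would use the modal decomposition $\fstate=\tangent\rstate+\orthogonal\nstate$, which follows from $\rstate=\tangent^\top\fstate$, $\nstate=\orthogonal^\top\fstate$, and $\tangent\tangent^\top+\orthogonal\orthogonal^\top=\eye$ in \eqref{eq:orthbasis}, together with the SSM identity $\orthogonal\orthogonal^\top\wnl(\rstate)=\wnl(\rstate)$ from \eqref{eq:VnVnWnl}, which yields $\wmap(\rstate)=\tangent\rstate+\orthogonal\,\orthogonal^\top\wnl(\rstate)$. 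Subtracting, the on-manifold part $\tangent\rstate$ cancels exactly and
$$\fstate-\wmap(\rstate)=\orthogonal\big(\nstate-\orthogonal^\top\wnl(\rstate)\big)=\orthogonal\,\xnerror,$$
using the definition \eqref{eq:x_n_tilde} of $\xnerror$. Since $\orthogonal$ has orthonormal columns, $\norm{\fstate-\wmap(\rstate)}=\norm{\xnerror}\leq s$. Lipschitz continuity of $h_j$ and submultiplicativity of the induced norm then give $h_j(\C\fstate)-h_j(\C\wmap(\rstate))\leq L_{h_j}\norm{\C}\,\norm{\fstate-\wmap(\rstate)}\leq L_{h_j}\norm{\C}s$.

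Combining the two bounds yields $h_j(\C\fstate)\leq h_j(\C\wmap(\zstate))+L_{h_j}L_{\C\wmap}\delta+L_{h_j}\norm{\C}s$, which is $\leq 0$ precisely by the tightened constraint \eqref{eq:tighten}. Since this holds for every $j=1,\ldots,n_h$, we conclude $\y=\C\fstate\in\mathcal{Y}$. The only subtle point — and the main thing to get right — is the exact cancellation giving $\fstate-\wmap(\rstate)=\orthogonal\xnerror$; this is what certifies that the off-manifold tube $s$ is precisely the quantity controlling $\norm{\fstate-\wmap(\rstate)}$, rather than a looser surrogate, so no slack is wasted in the tightening. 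Everything else is a routine triangle-inequality and Lipschitz argument, and in particular no property of the disturbance or input is needed here, since $s$ and $\delta$ already encapsulate their cumulative effect via Proposition~\ref{prop:tubedyn}.
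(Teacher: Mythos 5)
Your proof is correct and follows essentially the same route as the paper's: telescoping $h_j(\C\fstate)$ through $\C\wmap(\rstate)$ and $\C\wmap(\zstate)$, bounding the on-manifold difference by $L_{h_j}L_{\C\wmap}\delta$ and the off-manifold difference by $L_{h_j}\norm{\C}s$ via the exact cancellation $\fstate-\wmap(\rstate)=\orthogonal\xnerror$ (which the paper establishes as Equation~\eqref{eq:diffxandw} in the proof of Lemma~\ref{lem:manError} and reuses here). Your explicit derivation of that identity from \eqref{eq:orthbasis} and \eqref{eq:VnVnWnl} is a welcome elaboration of a step the paper leaves implicit, but the argument is the same.
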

Using these tightened constraints, we now formulate our proposed RN-ROMPC as follows: 
\begin{subequations} \label{eq:homotheticOCP}
\begin{align}
        &\min_{\ctrl(\cdot), \zstate(\cdot)}~ \int_{0}^{T_{\mathrm{f}}} \ell(\zstate(\tau),\ctrl(\tau))\mathrm{d}\tau +\ell_{\mathrm{f}}(\zstate(T_{\mathrm{f}})) \nonumber \\
    \mathrm{s.t.}~
        & \delta(0) = \norm{\rstate(t) - \zstate(0)},\\
        \label{eq:homotheticOPC_init_s}
        & s(0) = s_0,\\
        &\zdotstate =\rom(\zstate)+\Br \ctrl, \label{eq:zDyn}\\
        &\dot{s} = \lambda_{\An} s + (1 + L_{\wnl})(L_{\fnl} s + \bar{d}) \label{eq:sDyn}\\
            & \quad\quad\quad\quad\quad\quad\, + \norm{ \Bn \ctrl} + L_{\wnl} \norm{\Br \ctrl }, \nonumber \\
        &\dot{\delta} = (\lambda_{\Ar} + L_{\rnl})\delta + L_{\fnl} s + \bar{d}, \label{eq:deltaDyn}\\
        &h_j(\C \wmap(\zstate)) + L_{h_j} L_{\C \wmap}\delta + L_{h_j} \norm{\C}s \leq 0, \label{eq:MPCconstraint}\\
        &\ctrl(\tau) \in \Cntrl \\
        & (\zstate(T_{\mathrm{f}}),\delta(T_{\mathrm{f}}),s(T_{\mathrm{f}}))\in\mathcal{X}_{\mathrm{f}} \label{eq:termConstraint}\\
        & \tau \in [0, T_{\mathrm{f}}], \quad j = 1,\ldots, n_{h}. \nonumber
\end{align}
\end{subequations}
Here, $T_{\mathrm{f}}>0$ represents the prediction horizon. The trajectories of $\delta,\, s,\, \zstate$ are obtained through the tube propagation outlined in Proposition~\ref{prop:tubedyn} and are subject to the tightened constraints in Proposition~\ref{prop:tighten}. The measured reduced-order state $\rstate$ and a variable $s_0 \geq \norm{\xnerror}$ provide the initial conditions. We denote the optimal solution to Problem~\eqref{eq:homotheticOCP} with a star ($^\star$).

The following algorithms summarize the overall design and closed-loop operation.
\begin{algorithm}[H]
\caption{Offline design}
Determine ROM in \eqref{eq:rom} and \eqref{eq:nominal} from known model \cite{jain2021compute} or from data \cite{AloraCenedeseEtAl2023}.\\
Compute Lipschitz constants in \eqref{eq:s_delta_dynamics} and \eqref{eq:tighten}.\\
Design terminal cost/set $\ell_{\mathrm{f}}$, $\mathcal{X}_{\mathrm{f}}$ (Assumption~\ref{assum:terminal}).
\label{alg:offline}
\end{algorithm}
\begin{algorithm}[H]
\caption{Online operation}
At $t=0$: Initialize $s_0\geq \|\xnerror\|$
\begin{algorithmic}
\For{each sampling time $t_k = k \Delta$, $k \in \N$}
\State Measure reduced state $\rstate(t_k)$
\State Solve Problem~\eqref{eq:homotheticOCP}
\State Apply input $\ctrl^\star(\tau)$, $\tau\in[0,\Delta)$
\State Set initial value $s_0=s^\star(\Delta)$.
\EndFor
\end{algorithmic}
\label{alg:online}
\end{algorithm}
In the following, we consider for simplicity a quadratic stage cost $\ell(\zstate,\ctrl):=\|\zstate-\bar{\mathbf{z}}_{\mathrm{r}}\|_Q^2+\|\ctrl-\bar{\ctrl}\|_R^2$ with positive definite matrices $Q,R$ and some nominal steady-state $r(\bar{\mathbf{z}}_{\mathrm{r}})+\Br\bar{\ctrl}=0$. 
To ensure closed-loop guarantees, we also require suitable conditions on the terminal cost $\ell_{\mathrm{f}}$ and the terminal set $\mathcal{X}_{\mathrm{f}}$, as standard in MPC (cf.~\cite{rawlings2017model}).
\begin{assumption} \label{assum:terminal}
There exists
a terminal control law $\kappa:\mathbb{R}^n\rightarrow\mathcal{U}$, 
such that 
for any $(\zstate(0),\delta(0),s(0))\in\mathcal{X}_{\mathrm{f}}$, the trajectories $\zstate(\tau),\delta(\tau),s(\tau)$ according to~\eqref{eq:nominal}, \eqref{eq:s_delta_dynamics} with $\ctrl (\tau) = \kappa(\zstate(t))$ satisfy:
    \begin{enumerate}[i.]%
        \item positive invariance: $(\zstate(\Delta),\delta(\Delta),s(\Delta))\in\mathcal{X}_{\mathrm{f}}$
        \label{assum:terminal_PI} 
        \item constraint satisfaction: $\zstate(\tau),\delta(\tau),s(\tau)$ satisfy \eqref{eq:tighten} for all $\tau\in[0,\Delta]$
        \label{assum:terminal_constraints}
        \item control Lyapunov function: \\$\ell_{\mathrm{f}}(\zstate(\Delta))-\ell_{\mathrm{f}}(\zstate(0)) \leq \int_{0}^{\Delta} \ell(\zstate(\tau),\ctrl(\tau))\mathrm{d}\tau$.
        \label{assum:terminal_CLF}
    \end{enumerate}
    where $\Delta \geq 0$ represents the sampling period. %
\end{assumption}
The simplest way to construct such a terminal set is $\mathcal{X}_{\mathrm{f}}=\{(\zstate,\delta,s)|~\zstate=\bar{\mathbf{z}}_{\mathrm{r}}, (\delta,s)\in\mathcal{A}\}$, $\ell_{\mathrm{f}}=0$, $\kappa=\bar{\ctrl}\in\mathcal{U}$ where $\mathcal{A}\subseteq\mathbb{R}^2_{\geq 0}$ is a positive invariant set for the linear dynamics~\eqref{eq:s_delta_dynamics}, in the linear constraint set~\eqref{eq:tighten}, with constant $\zstate=\bar{\mathbf{z}}_{\mathrm{r}}$.\footnote{%
A corresponding (e.g. polytopic) set always exists, if $\bar{\ctrl}\in\mathcal{U}$, $GC\bar{\mathbf{z}}_{\mathrm{r}}<g$ and $\bar{d},L_{\fnl},L_{\rnl}>0$ are sufficiently small.}
The following theorem summarizes the theoretical properties of the proposed MPC scheme.
\begin{theorem}
Suppose that the initialization at  $t=0$ satisfies $s_0\geq \|\xnerror(0)\|$ and  that Problem \eqref{eq:homotheticOCP} is feasible at time $t=0$. %
Then, Problem \eqref{eq:homotheticOCP} is feasible for all sampling times $t_k$, $k \in \N$, and the closed loop system resulting from Algorithm~\ref{alg:online} satisfies the constraints \eqref{eq:constraints} for all $t \geq 0$. Furthermore, as $\lim _{k \to \infty}$, the nominal trajectory converges to the desired steady-state, \ie
$\zstate^{\star}=\bar{\mathbf{z}}_{\mathrm{r}}$, $\ctrl^{\star}= \bar{\ctrl} $.
\end{theorem}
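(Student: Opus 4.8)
The plan is to follow the standard two-part template for robust MPC---recursive feasibility together with robust constraint satisfaction, followed by a Lyapunov argument for convergence---while carefully tracking the scalar tube variables $s$ and $\delta$ and exploiting the monotonicity of their dynamics~\eqref{eq:sDyn},\eqref{eq:deltaDyn}. Throughout, let $V_k$ denote the optimal value of Problem~\eqref{eq:homotheticOCP} at sampling time $t_k$, and let $(\ctrl^\star,\zstate^\star,s^\star,\delta^\star)$ denote its optimizer. \textbf{Recursive feasibility.} At $t_{k+1}=t_k+\Delta$ I would build a feasible (generally suboptimal) candidate from the previous optimizer by the usual time-shift plus terminal controller: $\ctrl^{\mathrm c}(\tau)=\ctrl^\star(\tau+\Delta)$ and $\zstate^{\mathrm c}(\tau)=\zstate^\star(\tau+\Delta)$ on $[0,T_{\mathrm f}-\Delta]$, and $\ctrl^{\mathrm c}(\tau)=\kappa(\zstate^{\mathrm c}(\tau))$ on $[T_{\mathrm f}-\Delta,T_{\mathrm f}]$. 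The first step is to verify the reinitialization is valid: applying Proposition~\ref{prop:tubedyn} on $[0,\Delta]$ with the committed input gives $\norm{\xnerror(t_{k+1})}\leq s^\star(\Delta)$ and $\norm{\rstate(t_{k+1})-\zstate^\star(\Delta)}\leq\delta^\star(\Delta)$, so that Algorithm~\ref{alg:online}'s choice $s_0=s^\star(\Delta)$ dominates $\norm{\xnerror(t_{k+1})}$, while the freshly reinitialized $\delta^{\mathrm c}(0)=\norm{\rstate(t_{k+1})-\zstate^\star(\Delta)}\leq\delta^\star(\Delta)$.

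Next I would propagate the candidate tubes and dominate them by the shifted optimal tubes. Since~\eqref{eq:sDyn} depends only on $s$ and the (identical, shifted) input, $s^{\mathrm c}(\tau)=s^\star(\tau+\Delta)$ exactly on $[0,T_{\mathrm f}-\Delta]$, and likewise $\zstate^{\mathrm c}(\tau)=\zstate^\star(\tau+\Delta)$. For $\delta$ I would invoke the comparison lemma: the right-hand side of~\eqref{eq:deltaDyn} is nondecreasing in $(\delta,s)$, so $\delta^{\mathrm c}(0)\leq\delta^\star(\Delta)$ together with $s^{\mathrm c}=s^\star(\cdot+\Delta)$ yields $\delta^{\mathrm c}(\tau)\leq\delta^\star(\tau+\Delta)$. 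Because the tightened constraint~\eqref{eq:MPCconstraint} is monotonically nondecreasing in $\delta$ and $s$, and the shifted optimal trajectory already satisfies it (using Assumption~\ref{assum:terminal}.\ref{assum:terminal_constraints} on the last interval), the candidate satisfies~\eqref{eq:MPCconstraint} on all of $[0,T_{\mathrm f}]$. Finally, the terminal triple $(\zstate^\star(T_{\mathrm f}),\delta^{\mathrm c}(T_{\mathrm f}-\Delta),s^\star(T_{\mathrm f}))$ lies in $\mathcal{X}_{\mathrm f}$ by $\delta^{\mathrm c}(T_{\mathrm f}-\Delta)\leq\delta^\star(T_{\mathrm f})$ and the structure of $\mathcal{X}_{\mathrm f}$, so positive invariance (Assumption~\ref{assum:terminal}.\ref{assum:terminal_PI}) closes the terminal arc. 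Feasibility then propagates by induction from the assumed feasibility at $t=0$.

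\textbf{Robust constraint satisfaction and stability.} Constraint satisfaction for all $t\geq 0$ follows immediately: on each interval $[t_k,t_{k+1})$ the committed input makes the bounds of Proposition~\ref{prop:tubedyn} hold with the reinitialization just verified, and Proposition~\ref{prop:tighten} converts the satisfied tightened constraint~\eqref{eq:MPCconstraint} into $\y=\C\fstate\in\mathcal{Y}$. For convergence I would use $V_k$ as a Lyapunov function. The candidate cost depends only on $\zstate^{\mathrm c},\ctrl^{\mathrm c}$ (not on the tubes), so telescoping its terminal pieces via the control-Lyapunov property (Assumption~\ref{assum:terminal}.\ref{assum:terminal_CLF}) gives the standard decrease $V_{k+1}\leq V_k-\int_0^\Delta \ell(\zstate^\star(\tau),\ctrl^\star(\tau))\,\mathrm{d}\tau$. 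Since $V_k\geq 0$ is nonincreasing it converges, hence $\int_0^\Delta\ell\,\mathrm d\tau\to 0$, and positive definiteness of $\ell$ in $(\zstate-\bar{\zstate},\ctrl-\bar{\ctrl})$ forces $\zstate^\star\to\bar{\zstate}$ and $\ctrl^\star\to\bar{\ctrl}$.

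\textbf{Main obstacle.} I expect the delicate point to be the interplay between the reinitialized $\delta^{\mathrm c}(0)$---computed from the freshly measured true reduced state rather than inherited from the previous prediction---and the monotone tube propagation. The crux is that one only has the inequality $\delta^{\mathrm c}(0)\leq\delta^\star(\Delta)$, after which I must simultaneously invoke (i) monotonicity of the coupled $(\delta,s)$-dynamics to conclude the candidate tube is dominated by the shifted optimal tube, (ii) monotonicity of~\eqref{eq:MPCconstraint} in $\delta,s$ to preserve constraint satisfaction, and (iii) a ``lower-closedness'' property of $\mathcal{X}_{\mathrm f}$ in $\delta$ so that the smaller-$\delta$ terminal triple remains admissible and Assumption~\ref{assum:terminal} applies. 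Verifying the relevant sign conditions and this structural property of $\mathcal{X}_{\mathrm f}$ is where the genuine care lies; the remaining shift-and-append steps are routine.
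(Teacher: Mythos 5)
Your proposal is correct and follows essentially the same route as the paper: the shifted candidate with terminal controller, reinitialization of $\delta$ from the measured state with $\delta^{\mathrm c}(0)\leq\delta^\star(\Delta)$ by Proposition~\ref{prop:tubedyn}, the comparison-lemma domination of the candidate $\delta$-tube, monotonicity of the tightened constraints, and the standard value-function decrease (the paper closes the convergence step by citing Barbalat's lemma, which supplies the uniform-continuity step you leave implicit). Your flagged ``lower-closedness'' of $\mathcal{X}_{\mathrm f}$ in $\delta$ is a fair observation the paper handles only implicitly via its example construction of the terminal set, but it does not constitute a different argument.
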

\begin{proof}
The following proof utilizes standard MPC arguments (cf.~\cite{rawlings2017model}) and the derived bounds in
Propositions~\ref{prop:tubedyn} and \ref{prop:tighten}.\\
\textbf{Part I. Recursive feasibility:}
    Assume Problem~\eqref{eq:homotheticOCP} is feasible at time $t_k$, $k \in \N$, and let $s^\star(\tau)$, $\delta^\star(\tau)$, $\zstate^\star(\tau)$, $\ctrl^\star(\tau)$ for $\tau \in [0, T_{\mathrm{f}}]$ denote its solution. At time $t_{k+1}$, we consider the following shifted candidate solution
    \begin{subequations}
    \label{eq:candidate}
    \begin{align}
        s(0) =& s^\star(\Delta)\\
        \zstate(0) =& \zstate^\star(\Delta)\\
        \ctrl(\tau) =& \begin{cases} \ctrl^\star(\tau+\Delta), & \tau \in [0, T_{\mathrm{f}}-\Delta]\\  \kappa(\zstate(\tau)), & \tau \in [T_{\mathrm{f}} - \Delta, T_{\mathrm{f}}] \end{cases}
    \end{align}
    \end{subequations}
    with trajectories $\zstate(\tau)$, $s(\tau)$, $\tau \in [0, T_{\mathrm{f}}]$ according to the dynamics~\eqref{eq:zDyn}, \eqref{eq:sDyn}. This implies $\zstate(\tau) = \zstate^\star(\tau + \Delta)$, $s(\tau) = s^\star(\tau + \Delta)$ for $\tau \in [0, T_{\mathrm{f}} - \Delta]$. 
  
    By Proposition \ref{prop:tubedyn}, $\delta(0) = \norm{\rstate(0) - \zstate(0)} \leq \delta^\star (\Delta)$. 
    Since the trajectories $\delta(\tau)$ and $\delta^\star(\tau+\Delta)$ are subject to the same (continuous) dynamics~\eqref{eq:deltaDyn} with the same $s$, it follows from the comparison lemma \cite[Lemma 3.4]{khalil2002nonlinear} that $\delta(\tau) \leq \delta^\star(\tau+\Delta)$ for $\tau \in [0, T_{\mathrm{f}} - \Delta]$. 
    Thus, for $\tau \in [0, T_{\mathrm{f}}-\Delta]$ the candidate solution satisfies the constraints~\eqref{eq:MPCconstraint} with
    \begin{align*}
        &h_j(\C \wmap(\zstate(\tau))) + L_{h_j} L_{\C \wmap}\delta(\tau) + L_{h_j} \norm{\C}s(\tau) \\
        \leq& h_j(\C \wmap(\zstate(\tau+\Delta)))\\
        &+ L_{h_j} L_{\C \wmap}\delta(\tau+\Delta) + L_{h_j} \norm{\C}s(\tau+\Delta)        \stackrel{\eqref{eq:MPCconstraint}}{\leq}0.
    \end{align*}
    By Assumption \ref{assum:terminal}(\ref{assum:terminal_constraints}), we also have that \eqref{eq:MPCconstraint} holds for $\tau \in [T_{\mathrm{f}}-\Delta, T_{\mathrm{f}}]$, i.e., the constraints \eqref{eq:MPCconstraint} hold for $\tau \in [0, T_{\mathrm{f}}]$. 
    Lastly, by Assumption \ref{assum:terminal}(\ref{assum:terminal_PI}), we also have that \eqref{eq:termConstraint} holds. Thus, the MPC resulting from Algorithm~\ref{alg:online} is recursively feasible.\\
    \textbf{Part II. Constraint satisfaction:} 
    First, due to the fixed initial condition of $s$ in Algorithm~\ref{alg:online} and~\eqref{eq:homotheticOPC_init_s}, $s_0$ satisfies the dynamics~\eqref{eq:s_dynamics} also across optimization steps. 
     Hence, the initialization $s_0\geq \|\xnerror(0)\|$ and Proposition~\ref{prop:tubedyn} ensures that at each sampling time $t_k$:  $s_0\geq \|\xnerror(t_k)\|$ holds recursively. 
      Furthermore, applying Proposition~\ref{prop:tubedyn} in the interval $\tau\in[0,\Delta)$ yields
      $\|\rstate(t_k+\tau)-\zstate^\star(\tau)\|\leq \delta^\star(\tau)$, $\|\xnerror(\tau+t_k)\|\leq s(\tau)$. 
      Finally, Proposition~\ref{prop:tighten} and the tightened constraints~\eqref{eq:MPCconstraint} for $\tau\in[0,\Delta)$ yield 
      $h_j(\C\fstate(t))\leq 0$, $\forall j=1,\ldots,n_h$, $t\in[t_k,t_{k+1})$, i.e., the constraints~\eqref{eq:constraints} hold for all $t\geq 0$. \\
 \textbf{Part III. Convergence/stability:}
 The cost function in~\eqref{eq:homotheticOCP}, candidate solution~\eqref{eq:candidate}, and terminal cost condition in Assumption~\ref{assum:terminal}(\ref{assum:terminal_CLF}) are equivalent to nominal MPC with state $\zstate$~\cite{rawlings2017model}. 
Hence, following standard arguments, it holds that
\begin{align*}
\sum_{k=0}^\infty\int_0^\tau\ell(\zstate(\tau),\ctrl(\tau))<\infty,
\end{align*}
and Barbalat's Lemma~\cite{khalil2002nonlinear} ensures convergence, see, \eg~\cite[Thm.~12]{sasfi2022robust} for details.
\end{proof}
As is common in robust MPC, a linear tube-feedback $\ctrl = \mathbf{K} \rstate + \mathbf{c}$ can be used to reduce conservatism \cite{kouvaritakis2016model}, but in this work, we solely focus on open-loop prediction for simplicity.

\section{Discussion}
\label{sec:discussion}
In the following, we discuss the qualitative properties of the proposed RN-ROMPC scheme and practical implementation aspects for data-driven models.
\subsection{Properties of Robust RN-ROMPC Scheme}
We now discuss several properties of our proposed robust RN-ROMPC scheme. First, for a full order system $\nrstate = \nfstate$, the proposed RN-ROMPC scheme is comparable to a robust MPC scheme using a homothetic tube, where $\delta\geq 0$ is the corresponding scaling (cf.~\cite{sasfi2022robust,rakovic2022homothetic}). The difference is that we account for errors due to the SSM-based reduction scheme by exploiting the invariance properties of the manifold. These properties allow us to decompose the error dynamics into an off-manifold error component $\xnerror$ and an on-manifold component $\rstate-\zstate$.

Second, for small enough Lipschitz constants $L_{\fnl}$, $L_{\wnl}$, and $L_{\rnl}$, \ie $|(1 + L_{\wnl})L_{\fnl}| < |\lambda_{\An}|$ and $|L_{\rnl}| < |\lambda_{\Ar}|$, the tube dynamics are stable. 
At least locally, we expect the off-manifold error dynamics to be stable due to the expected time-scale separation between $\lambda_{\Ar}$ and $\lambda_{\An}$, \ie $|\lambda_{\Ar}| \gg |\lambda_{\An}|$ along with the fact that the Lipschitz constants are arbitrarily small for analytic functions (in a small enough neighborhood of the origin). To reduce conservativeness in the $\delta$ predictions, we could use an additional linear feedback $\ctrl = \mathbf{K} \rstate + \mathbf{c}$ to ensure that $|\lambda_{\Ar}| \gg L_{\rnl}$.

Lastly, according to the tube dynamics \eqref{eq:s_delta_dynamics}, as $\norm{\ctrl(t)}$ increases, the orthogonal error increases. The addition of control input leads to the excitement of the fast modes; thus, the SSM is no longer invariant. Applying large inputs results in large values of $s$ and, in turn, large on-manifold error, $\delta$. This causes the model's uncertainty to grow, increasing the constraint tightening in \eqref{eq:tighten}.
Hence, if we wish to operate close to the constraints, the proposed MPC policy will implicitly act cautiously to reduce the excitation of the fast modes.

\begin{figure}
    \centering
    \includegraphics[width=0.47\textwidth]{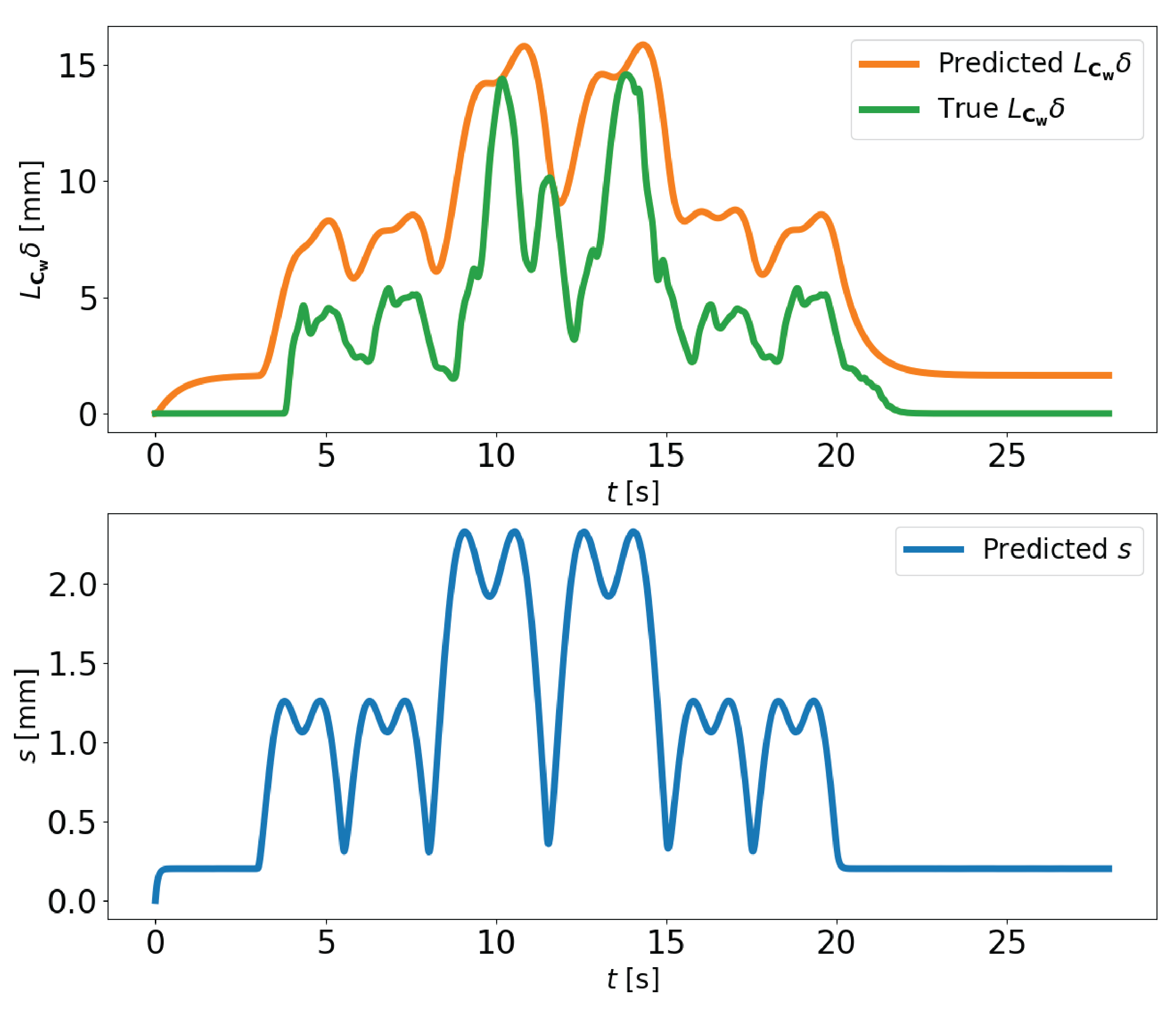}
    \caption{Trajectory of fitted dynamics $\delta(t)$ in \eqref{eq:delta_dynamics} (top) and $s(t)$ in \eqref{eq:new_s_dynamics} (bottom). At zero input, the predicted $s$ and $\delta$ settle to non-zero steady state due to the bounding disturbances $\bar{d}, \hat{d} > 0$. We only plot predicted $s$ because, in practice, we do not have access to the fast modes, $\nstate$.}
    \label{fig:fittedTubeDyn}
    \vspace{-0.65cm}
\end{figure}
\subsection{Data-Driven Reduced-Order Model}
\label{subsec:dataDriven}
In the previous sections, we extract ROMs directly from a known model $\mathbf{f}$. This may be difficult even in a simulation environment since extracting the full-order model from finite element code is a cumbersome and code-intrusive process. Furthermore, for real-world experiments, we may want to extract reduced models directly from observation data.
\begin{figure*}
\centering
\includegraphics[width=0.92\textwidth]{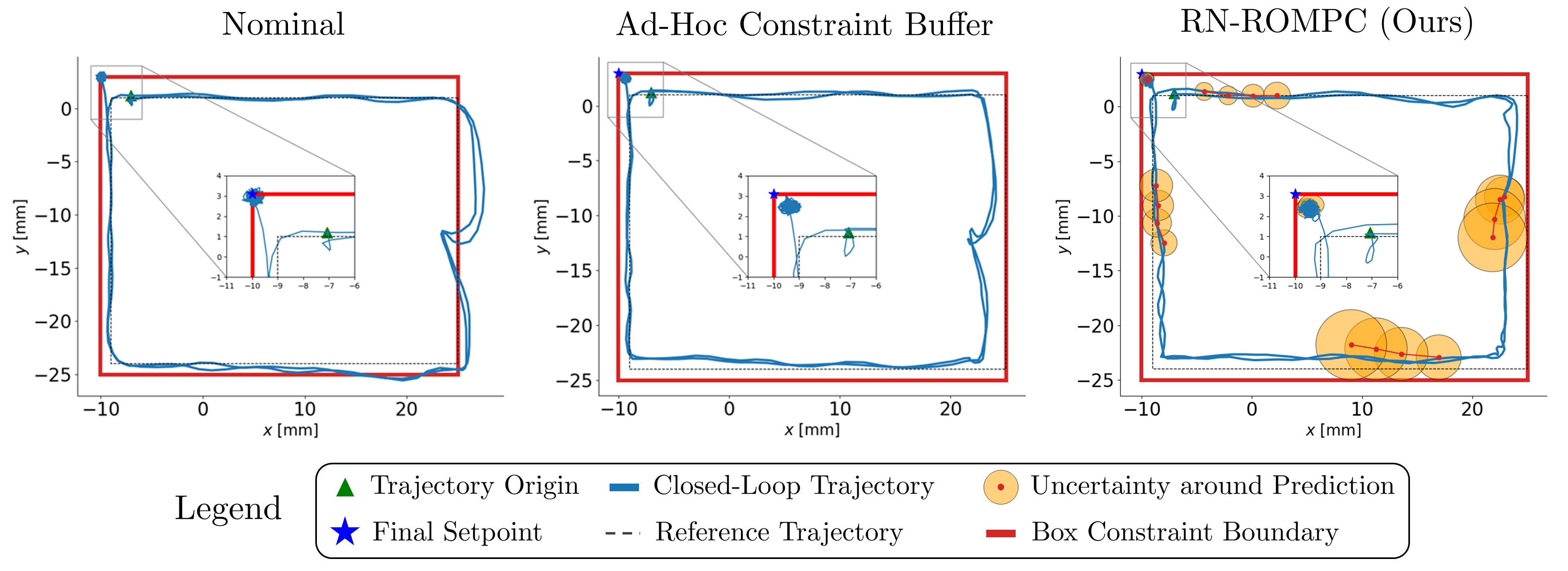}
  \caption{Simulation results comparing SSM-based MPC with soft constraints (left), an ad-hoc constraint buffer scheme (center), and our proposed RN-ROMPC scheme (right) for a square periodic trajectory and setpoint. The trajectory has a period of 1 second and the prediction horizon for the MPC schemes is $T_f = 0.06$ seconds.}
  \label{fig:squareTraj}
  \vspace{-0.6cm}
\end{figure*}

To alleviate these challenges, we use the data-driven approach described in \cite{cenedese2022ssmlearn} to extract ROMs on SSMs. 
With this approach, we can estimate the reduced-order state $\rstate$ online using past output measurements $\mathbf{y}$.
We refer the interested reader to \cite{AloraCenedeseEtAl2023} for more details. In this case, $\bar{d}$ in \eqref{eq:s_dynamics} does not only account for the disturbances $\mathbf{d}(t)$ but also for induced regression and truncation error in the estimation of the SSM from limited and noisy data. Furthermore, we conduct a coordinate transformation such that $\Ar$ is in real block-diagonal form whose entries are in decreasing order of the real parts of its eigenvalues. Note that since we do not observe the fast modes, we never need to diagonalize System~\eqref{eq:FOM}.

Since we construct ROMs directly from output data, we do not have access to $\B$, $\orthogonal$, and $\wmap$ as required in \eqref{eq:s_dynamics}. To overcome this practical challenge, one can derive the following alternative off-manifold scalar bounding error dynamics:
\begin{equation}
\label{eq:new_s_dynamics}
    \dot{s} = (\lambda_{\An} + \bar{L}) s + \bar{B} \norm{\ctrl} + \hat{d},
\end{equation}
with the new constants $\bar{L} \geq \norm{\left(\eye  - \wnl'(\rstate) \tangent^\top \right)}L_{\fnl}$, $\bar{B} \geq \norm{\left(\eye  - \wnl'(\rstate) \tangent^\top \right)\mathbf{B}}$, and $\hat{d} \geq (1 + L_{\wnl})\bar{d}$ (see Proposition~\ref{prop:tubedyn}). 

Since we do not know these constants, we fit the tube dynamics of $\delta$ in \eqref{eq:delta_dynamics} and the new dynamics of $s$ in \eqref{eq:new_s_dynamics} from data by applying a sequence of open-loop control inputs $\ctrl(t)$ and estimating $\rstate(t)$ where $t \in [0, t_f]$. We then integrate the reduced dynamics in \eqref{eq:nominal} using the control inputs $\ctrl(t)$ to get the sequence $\mathbf{z}_r(t)$. The constants are fitted by solving the following optimization problem for a fixed $\hat{d}$ and $\bar{d}$.
\begin{equation}
\begin{aligned}
        \minimize_{L_{\fnl}, L_{\rnl}, \bar{B}}~
        & \int_{0}^{t_f} (\norm{\Gj} L_{\C \wmap} \delta + \norm{\Gj \C} s)dt \\
    \mathrm{subject~to}~
        & \text{System~\eqref{eq:delta_dynamics}}, \\
        & \text{System~\eqref{eq:new_s_dynamics}}, \\
        & \delta(t) \geq \norm{\rstate(t) - \mathbf{z}_r(t)}, \\
        & \delta(0) = s(0) = 0, \quad L_{\fnl}, L_{\rnl}, \bar{B} \geq 0,
\end{aligned}
\end{equation}
where $\mathbf{G}$ and $\mathbf{g}$ are the matrix and vector representing polytopic constraints, respectively, and $\mathbf{G}_j$ represents the $j$-th row such that the $j$-th constraint is $h_j(\obs) = \mathbf{G}_j \obs - \mathbf{g}_j$. This optimization problem solves for the appropriate constants that minimize the constraint tightening and are consistent with the generated data.

The fitting data is generated by applying zero inputs, followed by a sequence of alternating moderate and large inputs, and at last, zero inputs again for five seconds each. The data is generated with noisy inputs with $2$-norm of $400$ Newtons sampled from a Gaussian distribution. Figure~\ref{fig:fittedTubeDyn} shows the fitted tube dynamics under an open-loop control sequence $\ctrl(t)$. For $\bar{d} = \hat{d} = 3$ mm, the optimized constants are $L_{\fnl} = 120.897$, $L_{\rnl} = 2.019$, $\bar{L} = 0.001$, and $\bar{B} = 0.012$. Note that the tube dynamics are stable, and our upper bound closely tracks the true error of the system.

\section{SIMULATION RESULTS}
\label{sec:results}
In this section, we highlight the robustness properties of the proposed RN-ROMPC scheme in simulation.

\subsection{Setup}
We consider the control of an elastomer ``Diamond" soft robot. We conduct simulations using the SOFA framework based on the finite element method \cite{allard2007sofa}. The robot mesh used for simulation is available in the \textit{SoftRobots} plugin \cite{coevoet2017software}, and the parameters of the Diamond robot match those described in \cite{AloraCenedeseEtAl2023}: the Diamond robot has a mass of $0.45$ kg, Poisson ratio of 0.45, and Young's modulus of $175$ MPa. The finite element model has $1628$ nodes, leading to a $\nfstate = 9768$ dimensional state space. The damping is modeled with Rayleigh (proportional) damping.

We implement the proposed MPC scheme in the open-source soft robot control library\footnote[1]{https://github.com/StanfordASL/soft-robot-control} and learn a 6-dimensional ROM of the Diamond robot using the Spectral Submanifold Reduction for control library\footnote[2]{https://github.com/StanfordASL/SSMR-for-control} according to the procedure in \cite{AloraCenedeseEtAl2023}. We consider a receding horizon of $3$ time steps with a control sampling time of $dt = 0.02$ seconds. Problem~\eqref{eq:homotheticOCP}  is solved using sequential convex programming \cite{BonalliCauligiEtAl2019}.

\subsection{Results}
To demonstrate the efficacy of our proposed robust RN-ROMPC scheme, we consider a trajectory tracking problem where the robot tip is meant to follow a reference trajectory. The reference first corresponds to a periodic square reference with a 1-second period, which touches the constraints on the right and then converges to a setpoint on the top left corner of the constraints. Figure~\ref{fig:squareTraj} depicts a comparison of our proposed approach against a nominal MPC scheme (left) and an ad-hoc constraint buffer MPC scheme (right). 
The ad-hoc buffer is implemented by artificially tightening the original constraint bounds and the tightening is chosen to minimize conservativeness while remaining within the constraints. 
In both the nominal and ad-hoc buffer schemes, the constraints on $\mathbf{y}$ are treated as soft constraints. Additionally, we consider the noise $\dd(t) = \mathbf{B}\ctrl_d(t)$ where $\norm{\ctrl_d} = 400$ and $0 \leq u_i \leq 2500$, $i = 1,\dots,\nctrl$.

The nominal MPC  significantly violates the right border constraint and leaves the constraint set when attempting to track the setpoint. This is due to the fact that as the robot moves closer to the right border and further from its equilibrium point, the accuracy of the SSM ROM deteriorates, leading to inaccurate prediction in the MPC scheme. To account for this, we considered an ad-hoc constraint buffer scheme where we tightened the right constraint by 3 mm, the bottom by 2 mm, the left by 0.6 mm,  and the top constraint by 0.5 mm to ensure constraint satisfaction. On the other hand, our approach also renders the system safe during its entire operation without any ad-hoc tuning.

A qualitative comparison of the ad-hoc scheme and RN-ROMPC in Figure~\ref{fig:squareTraj} reveals that our approach is not much more conservative. Furthermore, we found that further tightening or loosening of the constraints resulted in more conservative behavior (compared to RN-ROMPC) or constraint violations, respectively. In contrast, our approach maintains the flexibility of being able to tighten the constraints dynamically and thus, handle arbitrary trajectories.

Note that in Figure~\ref{fig:squareTraj}, the uncertainty tubes surrounding the predictions vary in size depending on the robot's position from its equilibrium point. In particular, larger inputs are required as the robot moves further away from its fixed equilibrium point. Thus, as the robot moves towards (and away from) the bottom right corner, the required inputs are largest, resulting in the largest uncertainty tubes. In contrast, the tubes are smaller in parts of the workspace closer to the equilibrium point, \eg the top left corner. This is expected since our tube dynamics depend directly on the magnitude of the control inputs (see~Equation~\eqref{eq:s_delta_dynamics}). Since the uncertainty tubes shrink as the robot moves towards the origin, the controller becomes less conservative and gets nearer to the constraint to more closely track the desired trajectory.
\section{CONCLUSION}
In this work, we considered the problem of robust online optimal control for high-dimensional systems. We derived error bounds on our prediction model using properties of SSMs, formulated a novel robust MPC scheme based on these error bounds, proved that our scheme robustly satisfies constraints, and demonstrated the efficacy of our approach on a challenging, high-dimensional soft robot example in finite element simulation.

\iftoggle{ext}{
\section{Appendix}
\label{app:Appendix}
\begin{proof}[Proof of Lemma~\ref{lem:redAutDyn}]
    Using the definition of $\rstate$ and taking derivatives we have that
    \begin{align*}
    \rdotstate\stackrel{\eqref{eq:rom}}{=}&\tangent^\top \dot{\fstate}\stackrel{\eqref{eq:FOM_aut}}{=} 
    \tangent^\top (\mathbf{A} \fstate + \fnl(\fstate))\\
    \stackrel{\eqref{eq:rom}}{=}&\tangent^\top (\mathbf{A} (\tangent \rstate + \wnl(\rstate)) + \fnl(\wmap(\rstate))\\
    =&\Ar \rstate + \rnl(\rstate),
    \end{align*} 
where $\Ar \defeq \tangent^\top \mathbf{A}\tangent$ and $\rnl \defeq \tangent^\top \fnl(\wmap(\rstate))$. The last equality follows from applying invertibility \eqref{eq:invert} to the definitions of $\vmap$ and $\wmap$, \ie 
    \begin{align*}
        0=&\vmap(\wmap(\rstate))-\rstate \\
         \stackrel{\eqref{eq:rom}}{=}& \tangent^\top \tangent \rstate + \tangent^\top  \wnl(\rstate)-\rstate
        \stackrel{\eqref{eq:orthbasis}}{=} \tangent^\top \wnl(\rstate),
    \end{align*}
and using this to arrive at $0 = \Ar \tangent^\top \wnl \stackrel{\eqref{eq:orthbasis}}{=} \tangent^\top \mathbf{A} (\eye - \orthogonal\orthogonal^\top)\wnl \stackrel{\eqref{eq:Ar}}{=}\tangent^\top \mathbf{A}\wnl$.
\end{proof}
\begin{proof}[Proof of Lemma~\ref{lem:SSMproperties}]
    To show \eqref{eq:VnVnWnl}, we write
    \begin{align*}
        \orthogonal \orthogonal^\top \wnl(\rstate) \stackrel{\eqref{eq:orthbasis}}{=}& \left( \eye - \tangent \tangent^\top \right) \wnl(\rstate)\\
        =& \, \wnl(\rstate),
    \end{align*}
    where the last equality is due to $\tangent^\top\wnl(\rstate) = \mathbf{0}$.

    Substituting the definition of $\wmap$ \eqref{eq:rom} into the invariance equation \eqref{eq:invar}, we have that
    \begin{align*}
        &\mathbf{A} (\tangent \rstate + \wnl(\rstate)) + \fnl(\wmap(\rstate)) \\
        =& \left( \tangent + \wnl'(\rstate) \right) \rom(\rstate) \\
        \stackrel{\eqref{eq:rautdyn}}{=}& \tangent (\Ar \rstate + \rnl(\rstate)) + \wnl'(\rstate) \rom(\rstate).
    \end{align*}
    Equating nonlinear terms yields
    \begin{equation} \label{eq:nonlin_terms}
        \mathbf{A} \wnl(\rstate) + \fnl(\wmap(\rstate))
        = \tangent \rnl(\rstate) + \wnl'(\rstate) \rom(\rstate). 
    \end{equation}
    To arrive at \eqref{eq:VnDWnl}, we multiply both sides of \eqref{eq:nonlin_terms} by $\orthogonal^\top$ and get
    \begin{align*}
        &\orthogonal^\top \mathbf{A} \wnl(\rstate) + \orthogonal^\top \fnl(\wmap(\rstate))\\
        =&\orthogonal^\top \tangent \rnl(\rstate) + \orthogonal^\top \wnl'(\rstate) \rom(\rstate)\\
        \stackrel{\eqref{eq:orthbasis}}{=}& \orthogonal^\top \wnl'(\rstate) \rom(\rstate).  \qedhere
    \end{align*}
\end{proof}
\begin{proof}[Proof of Lemma~\ref{lem:trueDynModal}]
    Taking the derivative of the reduced component yields
    \begin{align*}
        \rdotstate =& \tangent^\top \dot{\fstate} \\
        \stackrel{\eqref{eq:FOM}}{=}& \tangent^\top \left( \mathbf{A} \fstate + \fnl(\fstate) + \mathbf{B} \ctrl + \dd \right)  \\
        =&\tangent^\top \mathbf{A} \tangent \rstate + \tangent^\top \mathbf{A} \orthogonal \nstate + \tangent^\top \left( \fnl(\fstate) + \mathbf{B} \ctrl + \dd \right) \\
        \stackrel{\eqref{eq:Ar}}{=}&\Ar \rstate + \tangent^\top \left( \fnl (\fstate) + \mathbf{B} \ctrl + \dd \right)\\
        =& \Ar \rstate + \tangent^\top \fnl(\wmap(\rstate)) + \tangent^\top \left(\manError(\fstate) + \B \ctrl + \dd \right)\\
        \stackrel{\eqref{eq:rautdyn}}{=} & \rom(\rstate) + \tangent^\top \left(\manError(\fstate) + \B \ctrl +  \dd \right),
    \end{align*}
    and similarly, for the normal component, we have that
    \begin{align*}
    \ndotstate =& \orthogonal^\top \dot{\fstate} \\
    \stackrel{\eqref{eq:FOM}}{=}& \orthogonal^\top \left( \mathbf{A} \fstate +\fnl(\fstate)+\mathbf{B} \ctrl + \dd \right)\\
    =& \orthogonal^\top \mathbf{A}  \tangent \rstate + \orthogonal^\top \mathbf{A} \orthogonal \nstate
    + \orthogonal^\top \left(\fnl(\fstate)+\mathbf{B} \ctrl + \dd \right)\\
    \stackrel{\eqref{eq:Ar}}{=}&  \An \nstate
    + \orthogonal^\top \left(\fnl(\fstate)+\mathbf{B} \ctrl + \dd \right). \qedhere
    \end{align*}
\end{proof}
\begin{proof}[Proof of Lemma~\ref{lem:manError}]
    The error between the true state and its projection onto the SSM, $\mathcal{W}(E)$ is
    \begin{align} \label{eq:diffxandw}
            \norm{\fstate - \wmap(\rstate)} \stackrel{\eqref{eq:rom}}{=}& \norm{\tangent \rstate + \orthogonal \nstate - \tangent \rstate - \wnl(\rstate)} \nonumber\\
            \stackrel{\eqref{eq:VnVnWnl}}{=}& \norm{\orthogonal \left( \nstate - \orthogonal^\top \wnl(\rstate) \right)} \nonumber\\
            =& \norm{\nstate - \orthogonal^\top \wnl(\rstate)}
    \end{align}
    where the last equality uses the isometry property of $\orthogonal$.

    Thus, using the above and the fact that $\fnl$ is $L_{\fnl}$-Lipschitz, we have that
    \begin{align*}
        \norm{\manError(\fstate)} =& \norm{\fnl(\fstate) - \fnl(\wmap(\rstate))} \\
        \stackrel{\eqref{eq:diffxandw}}{\leq}& L_{\fnl} \norm{\nstate - \orthogonal^\top \wnl(\rstate)}. \qedhere
    \end{align*}
\end{proof}
\begin{proof}[Proof of Lemma~\ref{lem:orthError}]
Taking the derivative of the orthogonal component of the full state on the manifold yields
\begin{equation}
    \begin{aligned} \label{eq:deriv_VnTwnl}
        & \frac{\mathrm{d}}{\mathrm{d}t} \left( \orthogonal^\top \wnl(\rstate) \right) \\
        \stackrel{\text{\eqref{eq:truemodal}}}{=}& \orthogonal^\top  \wnl'(\rstate) \left( \rom(\rstate) + \tangent^\top ( \B \ctrl + \manError(\fstate) + \dd) \right)\\
        \stackrel{\eqref{eq:VnDWnl}}{=}& \orthogonal^\top \left( \mathbf{A} \wnl(\rstate) + \fnl(\wmap(\rstate)) \right)\\
        &+ \orthogonal^\top \wnl'(\rstate) \tangent^\top \left( \B \ctrl +  \manError(\fstate) + \dd \right)\\
        \stackrel{\eqref{eq:VnVnWnl}}{=}&  \An \orthogonal^\top \wnl(\rstate) + \orthogonal^\top \fnl(\wmap(\rstate)) \\
        &+ \orthogonal^\top \wnl'(\rstate) \tangent^\top \left( \mathbf{B} \ctrl + \manError(\fstate) + \dd \right).
    \end{aligned}
\end{equation}
Recalling that $\manError(\fstate) \defeq \fnl(\fstate) - \fnl(\wmap(\rstate))$, we get
\begin{align*}
    \xndoterror=& \ndotstate - \frac{\mathrm{d}}{\mathrm{d}t} \left( \orthogonal^\top \wnl(\rstate) \right)\\
    \stackrel{\eqref{eq:truemodal}, \, \eqref{eq:deriv_VnTwnl}}{=}& \An \xnerror + \orthogonal^\top \left( \fnl(\fstate) + \B \ctrl +  \dd \right) - \orthogonal^\top \fnl(\wmap(\rstate)) \\
    & - \orthogonal^\top \wnl'(\rstate) \tangent^\top \left( \mathbf{B} \ctrl + \manError(\fstate) + \dd \right)\\
    =& \An \xnerror + \orthogonal^\top \left(\eye  - \wnl'(\rstate) \tangent^\top \right) \left( \mathbf{B} \ctrl + \manError(\fstate) + \dd \right). \qedhere
\end{align*}
\end{proof}
\begin{proof}[Proposition~\ref{prop:tighten}]
For any $j = 1, \ldots, n_{h}$, we have
    \begin{align*}
        h_j(\C \fstate) =& h_j(\C \wmap(\rstate)) + h_j(\C \fstate) -h_j(\C \wmap(\rstate))\\
        \leq& h_j(\C \wmap(\zstate)) + \norm{h_j(\C \wmap(\rstate)) - h_j(\C \wmap(\zstate))}\\ &+ \norm{h_j(\C \fstate) -h_j(\C \wmap(\rstate))}\\
        \stackrel{\eqref{eq:x_n_tilde}}{\leq}& h_j(\C \wmap(\zstate)) + L_{h_j}\norm{\C \wmap(\rstate) -\C \wmap(\zstate)}\\ 
        &+ L_{h_j} \norm{\C}\norm{\xnerror}\\ 
        \leq& h_j(\C \wmap(\zstate)) + L_{h_j} L_{\C \wmap}\delta + L_{h_j} \norm{\C}s\\
        \stackrel{\eqref{eq:tighten}}{\leq}&0. \qedhere
    \end{align*}
\end{proof}
}{}

\printbibliography
\end{document}